\def\dOi{12(4:5)2016}
\subjclass{F.3.1, Specifying and Verifying and Reasoning about Programs}
\newcommand{\IGNORE}[1]{}
\newcommand{\ttt}{{\sf tt}}
\newcommand{\fff}{{\sf ff}}
\newcommand{\FV}[1]{{\sf fv}({#1})}
\newcommand{\FC}[1]{{\sf fc}({#1})}
\newcommand{\domain}[1]{{\sf dom}({#1})}
\newcommand{\NONE}{{\sf none}\xspace}
\newcommand{\SOME}{{\sf some}}
\newcommand{\NEW}[2]{(\nu {#1})\,{#2}}
\newcommand{\CASE}[4]{{\sf case}\ {#1}\ {\sf of}\ {\SOME}({#2})\!: {#3} {\ {\sf else}\ } {#4}}
\newcommand{\casE}[1] {{{\sf else}\ } {#1}}
\newcommand{\AMP}[2]{\&_{#1}(#2)}
\newcommand{\IN}[2]{{#1}?{#2}}
\newcommand{\OUT}[2]{{#1}!{#2}}
\newcommand{\NIL}{{\sf 0}}
\newcommand{\SEM}[1]{[\!\{{#1}\}\!]}
\newcommand{\DENOTATION}[1]{[\![{#1}]\!]}
\newcommand{\SYN}[1]{{\sf #1}}
\newcommand{\bisystem}{\ensuremath{P_{\Leftrightarrow}^{l}}}
\newcommand{\rightsystem}{\ensuremath{P_{\Rightarrow}^{l}}}
\newlength{\arrowlength}
\newcommand*{\strans}[1]{\xrightarrow{\mathmakebox[\arrowlength]{#1}}}
\newcommand{\dtrans}[1]{\Longrightarrow}
\newcommand{\Inference}[2]{\begin{array}{@{}c@{}}#1\\[0em]\hline\\[-0.9em]#2\\
\end{array}}
\newcommand{\Judge}[3]{{#1} \vdash {#2} \rightarrow {#3}}
\newcommand{\CONF}[3]{{#1}\!::_{#2}\!{#3}}
\newcommand{\tightoverset}[2]{%
  \mathop{#2}\limits^{\vbox to -1ex{\kern-1ex\hbox{$_#1$}\vss}}}
\definecolor{mygray}{gray}{.5}
\newcommand{\tightoversetl}[2]{%
  \mathop{#2}\limits^{\vbox to -1.3ex{\kern-1ex\hbox{$_#1$}\vss}}}
  \newcommand{\tightoversetv}[2]{%
  \mathop{#2}\limits^{\vbox to -.6ex{\kern-1ex\hbox{$_#1$}\vss}}}
\definecolor{mygray}{gray}{.5}
\definecolor{mygray}{gray}{.5}
\newcommand{\llt}{\leadsto}
\newcommand{\bbl}{[\hspace{-1.5pt}[}
\newcommand{\bbr}{]\hspace{-1.5pt}]}
\newcommand{\trprot}[2]{{\bbl {#1} \bbr} {#2}}
\newcommand{\hp}{\ensuremath{\varphi}}
\newcommand{\lab}[1]{\oldstylenums{#1}}
\newcommand{\trtree}[2]{{\bbl {#1} \bbr} {#2}}
\newcommand{\CASEL}[5]{{^{#1}\sf case}\ {#2}\ {\sf of}\ {\SOME}({#3})\!: {#4} {\ {\sf else}\ } {#5}}
\newcommand{\CAsel}[3]{{^{#1}\sf case}\ {#2}\ {\sf of}\ {\SOME}({#3})\!: }
\newcommand{\cost}{{\sf cost}}
\newcommand{\Names}{{\it Names}}
\newcommand{\hip}{{\sf hp}}
\newcommand{\ths}{{\sf th}}
\newcommand{\OSYN}[1] {{\sf \overline{#1}}}
\newcommand{\size}{\SYN{size}}
\newcommand{\allmodels}[1]{\ensuremath{M^{#1}}}
\newcommand{\attack}{{\sf attack}}
\newcommand{\security}{{\sf security}}
\newcommand{\minimal}{{\sf minimal}}
\newcommand{\level}{{\sf level}}
\theoremstyle{plain}\newtheorem{definition}[thm]{Definition}
\theoremstyle{plain}\newtheorem{theorem}[thm]{Theorem}
\theoremstyle{plain}\newtheorem{lemma}[thm]{Lemma}
\begin{document}

\title[Discovering, Quantifying, and Displaying Attacks]
      {Discovering, Quantifying, and Displaying Attacks\rsuper*}

\author[R.~Vigo]{Roberto Vigo}	
\address{DTU Compute, Technical University of Denmark, Richard Petersens Plads Bldg. 324, DK-2800 Kongens Lyngby, Denmark}	
\email{\{rvig, fnie, hrni\}@dtu.dk}  
\thanks{This work is supported by the IDEA4CPS project, granted by the Danish Research Foundations for Basic Research (DNRF86-10).}	

\author[F.~Nielson]{Flemming Nielson}	
\address{\vspace{-18 pt}}	

\author[H.~Riis Nielson]{Hanne Riis Nielson}	
\address{\vspace{-18 pt}}	



\keywords{Attack tree, protection analysis, Quality Calculus, Satisfiability Modulo Theories, security cost}
\titlecomment{{\lsuper*}This is a revised and extended version of~\cite{Vigo2014}}


\begin{abstract}
\noindent In the design of software and cyber-physical systems, security is often perceived as a qualitative need, but can only be attained quantitatively. Especially when distributed components are involved, it is hard to predict and confront all possible attacks. 
A main challenge in the development of complex systems is therefore to discover attacks, quantify them to comprehend their likelihood, and communicate them to non-experts for facilitating the decision process.

To address this three-sided challenge we propose a protection analysis over the Quality Calculus that (i) computes all the sets of data required by an attacker to reach a given location in a system, (ii) determines the cheapest set of such attacks for a given notion of cost, and (iii) derives an attack tree that displays the attacks graphically.

The protection analysis is first developed in a qualitative setting, and then extended to quantitative settings following an approach applicable to a great many contexts. The quantitative formulation is implemented as an optimisation problem encoded into Satisfiability Modulo Theories, allowing us to deal with complex cost structures. The usefulness of the framework is demonstrated on a national-scale authentication system, studied through a Java implementation of the framework.

\end{abstract}

\maketitle

\newpage

\tableofcontents

\newpage
\section{Introduction}\label{sec:introduction}
Our daily life is increasingly governed by software and cyber-physical systems, which are exploited in the realisation of critical infrastructure, whose security is a public concern. In the design of such systems, security is often perceived as a qualitative need, but experience tells that it can only be attained quantitatively. Especially when distributed components are involved, it is hard to predict and confront all possible attacks. Even if it were possible, it would be unrealistic to have an unlimited budget to implement all security mechanisms.

A main challenge in the development of complex systems is therefore to {\it discover} attacks, {\it quantify} them to comprehend their likelihood, and {\it communicate} them to decision-makers for facilitating deciding what countermeasures should be undertaken.

In order to address this three-sided challenge we propose a \emph{protection analysis} over the Quality Calculus~\cite{Nielson2012}. The calculus offers a succinct modelling language for specifying distributed systems with a high degree of interaction among components, which thus enjoy a highly-branching control flow. Moreover, process-algebraic languages have proven useful for describing software systems, organisations, and physical infrastructure in a uniform manner.

In the study of security, the compromise for obtaining such broad domain coverage while retaining a reasonable expressive power takes place at the level of attack definition. At a high level of abstraction, an attack can be defined as a sequence of actions undertaken by an adversary in order to make unauthorised use of a target asset. In a process-algebraic world, this necessary {\it interaction} between the adversary and the target system is construed in terms of communicating processes. Input actions on the system side can be thought of as {\it security checks} that require some information to be fulfilled. In particular, the capability of communicating over a given channel requires the knowledge of the channel itself and of the communication standard.

Hence, we shift the semantic load of secure communication onto channels, assuming that they provide given degrees of security. Whilst the correctness of such {\it secure channels}~\cite{Modersheim2009} is investigated in the realm of protocol verification, any system with a substantial need for security is likely to have standardised mechanisms to achieve various degrees of protection, and modelling them with secure channels is a coarse yet reasonable abstraction. An attack can be thus construed as a set of channels that fulfil the security checks (input actions) on a path leading to the target, i.e., a program point in the process under study.

The first task of the protection analysis, {\it discovering} all the attacks leading to a given location $l$ in a process $P$, in this setting reduces to compute all sets of channels over which communication must take place in order for $P$ to evolve to $l$. Technically, a process is translated into a set of propositional constraints that relate the knowledge of channels to communication actions and to reachability of program points. Each model of such set of constraints contains an attack. This perspective corresponds to a {\it qualitative} formulation of the analysis, in which all attacks are deemed equally interesting.

Nonetheless, secure channels allow modelling a great many domains. In IT systems, a channel can be thought of as a wired or wireless communication link over which messages are encrypted, and its knowledge mimics the knowledge of a suitable encryption key. In the physical world, a channel can represent a door, and its knowledge the ownership of the key, a pin code, or the capability of bypassing a retinal scan. This urges to acknowledge the variety of protection mechanisms and devise {\it attack metrics} by assigning costs to channels, thus quantifying the effort required to an attacker for obtaining a channel or, equivalently, the protection ensured by the security mechanism represented by a channel. As a consequence, attacks can be ordered according to their cost. Finally, a conservative approach to security demands to look for attacks that are {\it optimal} in the cost ordering (that is, minimal or maximal, according to the notion of cost we adopt -- in the following we shall focus without loss of generality on minimality).

Quantifying attacks allows to rule out those we deem unrealistic because exceeding the resources available to, or the skills of, a given attacker profile. Moreover, such a quantification allows to contrast the {\it desired security architecture} of a system, defined as a map from locations to security levels, with the {\it actual implementation of security}, defined as the protection deployed to guard a location $l$, which the analysis computes as the minimal cost required to reach $l$.

The second task of the protection analysis, {\it quantifying} the attacks leading to $l$, hence reduces to compute the costs of the attacks detected by the qualitative analysis. For the sake of performance, however, the naive approach that first computes all attacks (models) and then sorts them is not advisable. We instead implement the quest for minimal models on top of a Satisfiability Modulo Theories (SMT) solver, where minimality is sought with respect to an objective function defined on a liberal cost structure in which both symbolic and non-linearly-ordered cost sets can be represented.

The third task of the analysis, {\it displaying} attacks graphically, is achieved by backward chaining the set of constraints into which a process is translated, thus obtaining an {\it attack tree} whose root is the target location, leaves are channels, and internal nodes show how to combine sub-trees to attain the goal. Such a tree is denoted by a propositional formula whose minimal models can be computed with the SMT procedure mentioned above. Hence, our attack trees encompass both the qualitative and quantitative developments at once: a tree contains all the attacks leading to a given target, and the quest for the minimal ones can be carried out on its propositional denotation.

A Java implementation of the framework is available, which takes as input a process in the Value-Passing Quality Calculus, the location of interest, and the map from channels to costs. The tool computes the cheapest sets of atomic attacks leading to the goal, and produces an attack tree in which all attacks are summarised. The usefulness of the framework is demonstrated on the study of the NemID system, a national-scale authentication system used in Denmark to provide secure Internet communication between citizens and public institutions as well as private companies.\\

This work is a revised and extended version of~\cite{Vigo2014}. The analysis has been reformulated for highlighting its dual nature in terms of attack discovery and quantification, and has been combined with the insights of~\cite{Vigo2014a} on generating attack trees. The main novelty of the paper lies in the harmonisation of two distinct results, offering a coherent framework where the quantitative study of security and the generation of graphical models inform each other.

\subsection*{Contribution}
The novelty of the protection analysis is many-fold. First of all, the problem of inferring attacks and quantifying the protection offered by security checks is interesting {\it per se} and poorly addressed in the literature. In particular, we devise a comprehensive approach, from modelling systems and their security architecture to mechanising the verification of how this architecture has been realised in the implementation.

The extension of the analysis from qualitative to quantitative settings can be mimicked in a great many contexts, and it is seamlessly applicable to formal specifications that resort to encoding into logic formulae. In connection with this, our SMT-based solution procedure can be applied to all problems requiring to rank models of a logic formula according to given criteria.

The quest for optimal attacks can cope with arbitrary cost structures and objective functions, whose shape is only limited by the expressiveness of modern SMT solvers. Whilst our SMT approach to optimisation is not entirely novel, non-linearly-ordered and symbolic cost structures have not been addressed so far in connection with this technique. 

Finally, we overcome a number of shortcomings of existing methods for the automated generation of attack trees. These techniques typically suffer from focusing on computer networks, and thus suggest modelling languages tailored to this domain (cf. \S~\ref{sec:related-trees}). Moreover, the model-checking techniques that have been proposed recently for generating attack trees lead to an exponential explosion of the state space, limiting the applicability of automated search procedures. 

In order to overcome these drawbacks, the protection analysis derives attack trees from process-algebraic specifications in a syntax-directed fashion. As we have already put forward, process calculi have proven useful in describing a variety of domains, well-beyond computer networks. On the complexity side, being syntax-driven, static analysis often enjoys better scalability than model-checking approaches. Even though the theoretical complexity of our analysis is still exponential in the worst case, such a price depends on the shape of the process under study, and is not incurred systematically. Interestingly enough, the generation of attack trees seems to boost the performance of the approach, though it is not necessary to computing minimal attacks. In general, the interplay between static analysis and model checking had a key role in advancing the community's knowledge on the foundations of formal verification, and therefore it is promising to complement the existing studies on the attack tree generation problem with static analysis tools.

\subsection*{Organisation of the paper}
First, we present our modelling framework based on the Value-Passing Quality Calculus in \S~\ref{sec:calculus}. Besides syntax (\S~\ref{sec:calculus-syntax}) and semantics (\S~\ref{sec:calculus-semantics}) of the calculus, here we introduce our security interpretation of communication actions and the induced attacker model (\S~\ref{sec:calculus-security}). The NemID system, which will be our running example throughout the paper, is described in \S~\ref{sec:example}. 

The development of the protection analysis spans \S\S~\ref{sec:discovering}, \ref{sec:quantifying}, \ref{sec:displaying}. First, in \S~\ref{sec:discovering} we tackle the problem of discovering attacks leading to a location $l$ in a process $P$. To this end, we translate $P$ into a set of propositional constraints (\S~\ref{sec:discovering-translation}), which we then extend to encompass the attacker model (\S~\ref{sec:discovering-attacker}). The solution procedure (\S~\ref{sec:discovering-solution}) computes all the models of the constraints and from these extracts the attacks. This qualitative formulation of the analysis is demonstrated on the NemID system in \S~\ref{sec:discovering-example}, while in \S~\ref{sec:discovering-modularity} we comment upon integrating the analysis in a refinement cycle.

In \S~\ref{sec:quantifying} we extend the analysis to quantitative settings. First, we present how to formalise the intended security architecture of a system in terms of a security lattice (\S~\ref{sec:quantifying-labels}). Then, we introduce costs to channels so as to quantify attacks, and we relate the minimal costs for attacking a location to its desired security level, thus checking whether the desired security architecture is fulfilled by the implementation (\S~\ref{sec:quantifying-costs}). Our novel SMT-based approach to computing models of minimal cost is discussed in \S~\ref{sec:quantifying-solution} and demonstrated on the NemID system in \S~\ref{sec:quantifying-example}. The need for the SMT approach is motivated by the complex cost structures explained in \S~\ref{sec:quantifying-complex-costs}. We give a semantic interpretation of the attacker model in \S~\ref{sec:quantifying-restriction}.

Finally, the problem of generating attack trees is tackled in \S~\ref{sec:displaying}. First, we show how to generate trees from our sets of constraints describing how to combine channels to reach the target $l$ (\S~\ref{sec:displaying-inferring}). The procedure leads to generate a formula describing all attacks to $l$, whose optimal models can be computed thanks to the SMT procedure. We discuss the attack tree for the NemID system in \S~\ref{sec:displaying-example}.

In \S~\ref{sec:implementation} we discuss the implementation of the quantitative analysis. The solution approaches of \S\S~\ref{sec:quantifying} and \ref{sec:displaying} are compared in \S~\ref{sec:implementation-discussion}, and the proof-of-concept implementation is briefly presented in \S~\ref{sec:implementation-tool}.

Finally, \S~\ref{sec:related} surveys related work, organising them with respect to the analysis in general (\S~\ref{sec:related-analysis}) and to attack tree generation in particular (\S~\ref{sec:related-trees}). We conclude and present our outlook on future work in \S~\ref{sec:conclusion}.

\section{The Modelling Framework}\label{sec:calculus}
As specification formalism we adopt the Quality Calculus~\cite{Nielson2012}, a process calculus in the $\pi$-calculus family. In particular, the Quality Calculus introduces a new kind of input binders, termed {\it quality binders}, where a number of inputs are {\it simultaneously} active, and we can proceed as soon as some of them have been received, as dictated by a guard instrumenting the binder. Whilst these binders enhance the succinctness of highly-branching process-algebraic models, thus increasing their readability, they do not increase the expressiveness of the language with respect to $\pi$, and thus the protection analysis carries seamlessly to a variety of process calculi without such binders. Therefore, as far as this work is concerned, the reader can think of the calculus as a broadcast value-passing $\pi$-calculus enriched with quality binders.

\subsection{Syntax}\label{sec:calculus-syntax}
The syntax of the Value-Passing Quality Calculus is displayed in Table~\ref{tab:syntax}. The calculus consists of four syntactic categories: processes $P$, input binders $b$, terms $t$, and expressions $e$. A process can be prefixed by the restriction $(\nu c)$ of a name $c$, by an input binder $b$, by an output $\OUT{c}{t}$ of term $t$ on channel $c$, or be the terminated process $\NIL$, the parallel composition of two processes, a replicated process, or a $\SYN{case}$ clause.

A process $\OUT{c}{t}.P$ broadcasts term $t$ over channel $c$ and evolves to $P$: all processes ready to make an input on $c$ will synchronise and receive the message, but we proceed to $P$ even if there exists no such process (i.e., broadcast is non-blocking).

An input binder can either be a simple input $\IN{c}{x}$ binding variable $x$ to a message received over channel $c$, or a {\it quality binder} $\AMP{q}{b_1,\dots,b_n}$, where the $n$ sub-binders are {\it simultaneously} active. A quality binder is consumed as soon as enough sub-binders have been satisfied, as dictated by the quality guard $q$. A quality guard can be any Boolean predicate over the sub-binders: we use the abbreviations $\forall$ and $\exists$ for the predicates specifying that all the sub-binders or at least one sub-binder have to be satisfied before proceeding, respectively. For instance, the process $\AMP{\exists}{\IN{c_1}{x_1},\IN{c_2}{x_2}}.P$ evolves to $P$ as soon as an input is received on $c_1$, or an input is received on $c_2$.

When a quality binder is consumed, some input variables occurring in its sub-binders might have not been bound to any value, if this is allowed by the quality guard. Building on the example $\AMP{\exists}{\IN{c_1}{x_1},\IN{c_2}{x_2}}.P$, we can write:
$$
(\nu c_1)(\nu c_2)(\nu c_3)(\AMP{\exists}{\IN{c_1}{x_1},\IN{c_2}{x_2}}.P|\OUT{c_1}{c_3}.P')
$$
Such a process evolves to $P|P'$ even if no input is received on $c_2$. In order to record which inputs have been received and which have not, we always bind an input variable $x$ to an expression $e$, which is $\SOME(c)$ if $c$ is the name received by the input binding $x$, or $\NONE$ if the input is not received but we are continuing anyway. In this sense, we say that expressions represent {\it optional data}, while terms represent {\it data}, in the wake of programming languages like Standard ML~\cite{StandardML}. In order to insist on this distinction, variables $x$ are used to mark places where expressions are expected, whereas variables $y$ stand for terms, which are names (channels). Hence, in the example above variable $x_1$ is replaced by $\SOME(c_3)$ and $x_2$ is replaced by $\NONE$.

The $\SYN{case}$ clause in the syntax of processes is then used to inspect whether or not an input variable, bound to an expression, is indeed carrying data. The process $\CASE{x}{y}{P_1}{P_2}$ evolves into $P_1[c/y]$ if $x$ is bound to $\SOME(c)$; otherwise, if $x$ is bound to $\NONE$, $P_2$ is executed. Therefore, $\SYN{case}$ clauses allow detecting which condition triggered passing a binder, by detecting on which channels communication took place.
\begin{table}[t]
\caption{The syntax of the Value-Passing Quality Calculus.}
\hrule
$$
\begin{array}{lll}
P & ::= & \NIL\ \mid\ \NEW{c}{P}\ \mid\ P_1\!\mid\! P_2\ \mid\ ^{l}b.P\ \mid\ ^{l}\OUT{c}{t}.P\qquad\qquad\\
  & 	\mid & !P\ \mid \CASEL{l}{x}{y}{P_1}{P_2}\qquad\qquad \mbox{\small  (Process)}\\[1ex]
b & ::= & \IN{c}{x}\ \mid\ \AMP{q}{b_1, \cdots , b_n}\hfill \mbox{\small  (Binder)}\\[1ex]
t & ::= & c\ \mid\ y \hfill \mbox{\small  (Term)}\\[1ex]
e & ::= & x\ \mid\ \SOME(t)\ \mid\ \NONE \hfill \mbox{\small  (Expression)}
\end{array}
$$
\hrule
\label{tab:syntax}
\end{table}

In order to identify locations unambiguously, the syntax instruments each program point with a {\it unique} label $l \in \mathcal{L}$. This is the case of input binders, outputs, and case clauses in Table~\ref{tab:syntax}. In the following, we will denote labels with the numerals $\lab{1}, \lab{2}, \dots$. We say that a label $l$ is reached in an actual execution when the sub-process following $l$ is ready to execute. 

As usual, in the following we consider closed processes (no free variable), and we make the assumption that variables and names are bound exactly once, so as to simplify the technicalities without impairing the expressiveness of the framework.\\

The syntax presented above defines a fragment of the full calculus developed in~\cite{Nielson2012}. First, we have described a value-passing calculus, as channels are syntactically restricted to names as opposed to terms; in particular, no variable $y$ is allowed in a channel position. Second, terms are pure, that is, only names and variables are allowed (no function application). Last, the calculus is limited to test whether or not a given input variable carries data, but cannot test which data it is possibly carrying. We deem that these simplifications match with the abstraction level of the analysis, where secure channels are assumed that have to be established by executing given security protocols, and thus enjoy known properties. For the very same reason we opted for pure terms, since equational reasoning is usually exploited in process calculi to model cryptographic primitives (a Quality Calculus with such a feature is presented in~\cite{Vigo2013}). 

Ultimately, resorting to a value-passing calculus is to be understood as a choice that allows to produce attack trees of reasonable size, and not as a technical shortcoming. The comparison here is to be made between security protocols, often studied with expressive calculi (cf. \S~\ref{sec:related-analysis}), and distributed systems. Security protocols can be understood as basic building blocks that we would like to prove flawless; to this end, modelling the structure of messages and the functions operating over them is necessary (and often not enough). On the other hand, such a low-level representation seems instead not suitable to modelling complex systems and to support analyses whose results can be computable in practice and presented in a human-readable format. In this sense, the coarse abstraction we adopt with flat channels is the price to pay for scaling up from analysing protocols to analysing complex systems, where a flat name can represent an entire sub-system (software, physical, cyber-physical) whose analysis is condensed in the price/security guarantee attached to it.

In this light, a reinforced gate can be represented with a channel $g$, and its cost be the minimal grams of dynamite required to blow it up, or the average price of the tools needed to force it, or a synthesis of both, depending on the attacker we want to consider. Similarly, a secure communication link providing secrecy can be represented with a channel $s$, and its cost be the length of the cryptographic key used for encrypting information sent over $s$, or the trust we put in the encryption protocol used to implement $s$. More complex cost notions can be used to describe security properties of cyber-physical devices, where for example the battery level can dictate the security degree a sensor can afford. While it is true that part of the burden is moved onto the definition of suitable cost structures (some solutions particularly useful to facilitate the task in case of cyber-physical systems are advanced in \S~\ref{sec:quantifying-complex-costs}), the framework offers a clear separation of concerns between modelling a complex system at an appropriate detail level, specifying the security contracts of its component, and computing a weak-path analysis.

An extension of the framework that encompasses the key features of the full calculus of~\cite{Nielson2012} is briefly commented upon in Appendix~\ref{sec:fo-trees}, and is a starting point to delve into more specialised notions of costs, where the analysis on channels is combined with an analysis on the structure of messages.

\subsection{Semantics}\label{sec:calculus-semantics}
The semantics of the Value-Passing Quality Calculus consists of a set of transition rules parametrised on the structural congruence relation of Table~\ref{tab:congruence}. In the congruence, we denote the free names in a process $P$ as $\FC{P}$, where $\NEW{c}{P}$ binds the name $c$ in process $P$. Moreover, the congruence holds for contexts $C$ defined as
$$
C ::= [\ ]\ \mid\ \NEW{c}{C}\ \mid\ C\!\mid\!P\ \mid\ P\!\mid\!C
$$
As usual in $\pi$-like calculi, processes are congruent under $\alpha$-conversion, and renaming is enforced whenever needed in order to avoid accidental capture of names during substitution.

The semantics of Table~\ref{tab:semantics} is based on the transition relation $P \Longrightarrow P'$, which is enabled by combining a local transition $\strans{\lambda}$ with the structural congruence.
\begin{table}[t]
\caption{The structural congruence of the Value-Passing Quality-Calculus.}
\hrule
$$
\begin{array}{c}
\begin{array}{ccc}
P \equiv P
&
P\!\mid\! \NIL \equiv P
&
P_1\!\mid\! P_2 \equiv P_2\!\mid\! P_1\\[2ex]
P_1\!\mid\! (P_2\!\mid\! P_3) \equiv (P_1\!\mid\! P_2)\!\mid\! P_3
\qquad&\qquad
!P \equiv P | !P
\qquad&\qquad
\NEW{c_1}{\NEW{c_2}{P}} \equiv \NEW{c_2}{\NEW{c_1}{P}}
\end{array}\\[4ex]
\begin{array}{lr}
\NEW{c}{P} \equiv P \quad\hbox{if } c\notin\FC{P} 
\quad\qquad&\quad\quad
\NEW{c}{(P_1\mid P_2)} \equiv (\NEW{c}{P_1})\mid P_2 \quad\hbox{if } c\notin \FC{P_2}
\end{array}\\[2ex]
\begin{array}{ccc}
\Inference{P_1 \equiv P_2}{P_2 \equiv P_1}
\qquad&\qquad
\Inference{P_1 \equiv P_2 \qquad P_2 \equiv P_3}{P_1 \equiv P_3}
\qquad&\qquad
\Inference{P_1 \equiv P_2}{C[P_1] \equiv C[P_2]}
\end{array}
\end{array}
$$
\hrule
\label{tab:congruence}
\end{table}

The semantics models asynchronous broadcast communication, and makes use of a label $\lambda ::= \tau\ |\ \OUT{c_1}{c_2}$ to record whether or not a broadcast $\OUT{c_1}{c_2}$ is available in the system. If not, label $\tau$ is used to denote a silent action. Rule (Brd) states that an output is a non-blocking action, and when it is performed the broadcast is recorded on the arrow. When a process guarded by a binder receives an output, the broadcast remains available to other processes able to input. This behaviour is encoded in rules (In-ff) and (In-tt), where we distinguish the case in which a binder has not received enough inputs, and thus keeps waiting, from the case in which a binder is satisfied and thus the computation may proceed, applying the substitution induced by the received communication to the continuation process. These rules rely on two auxiliary relations, one defining how an output affects a binder, and one describing when a binder is satisfied (enough inputs have been received), displayed in the third and fourth section of Table~\ref{tab:semantics}, respectively.

We write $\Judge{\OUT{c_1}{c_2}}{b}{b'}$ to denote that the availability of a broadcast makes binder $b$ evolve into binder $b'$. When a broadcast name $c_2$ is available on the channel over which an input is listening, the input variable $x$ is bound to the expression $\SOME(c_2)$, marking that something has been received. Otherwise, the input is left syntactically unchanged, that is, it keeps waiting. Technically, the syntax of binders is extended to include substitutions:
$$
b ::= \cdots\, |\, [\SOME(c)/x]
$$
This behaviour is seamlessly embedded into quality binders, where a single output can affect a number of sub-binders, due to the broadcast paradigm and to the intended semantics of the quality binder, according to which the sub-binders are {\it simultaneously} active. 

As for evaluating whether or not enough inputs have been received, the relation \mbox{$\CONF{b}{r}{\theta}$} defines a Boolean interpretation of binders. An input evaluates to $\theta = [\NONE/x]$ with $r = \fff$, for it must be performed before continuing with the computation; a substitution evaluates to itself with $r = \ttt$, since it stands for a received input. The substitution related to a quality binder is obtained by composing the substitutions given by its sub-binders, while its Boolean interpretation is obtained applying the quality guard, which is a Boolean predicate, to the Boolean values representing the status of the sub-binders. This is denoted $\SEM{q}(r_1,\dots,r_n)$, and corresponds to the Boolean $\lor$ for the existential guard and to the Boolean $\land$ for the universal guard.
\begin{table*}[t!]
\caption{A broadcast value-passing semantics with replication.}
\hrule
$$
\Inference
{P_1 \equiv \NEW{\overrightarrow{c}}{P_2}\quad P_2 \strans{\lambda} P_3}
{P_1 \Longrightarrow P_3}\quad \mbox{\small (Sys)}
$$
\textcolor{gray}{\hrule}
$$
\begin{array}{l}
%
%
^{l}\OUT{c_1}{c_2}.P \strans{\OUT{c_1}{c_2}}P\hfill \mbox{\small (Brd)}
\\[3ex]
%
%
\Inference{
P_1 \strans{\OUT{c_1}{c_2}}P'_1 \quad
\OUT{c_1}{c_2} \vdash b \rightarrow b'\quad
\CONF{b'}{\fff}{\theta}
}{
P_1 \mid {^l}b.P_2 \strans{\OUT{c_1}{c_2}} P'_1 \mid {^l}b'.P_2
}\hfill \mbox{\small (In-ff)}\\[4ex]
\Inference{
P_1 \strans{\OUT{c_1}{c_2}}P'_1 \quad
\OUT{c_1}{c_2} \vdash b \rightarrow b'\quad
\CONF{b'}{\ttt}{\theta}
}{
P_1 \mid {^l}b.P_2 \strans{\OUT{c_1}{c_2}} P'_1 \mid P_2\theta
}\hfill \mbox{\small (In-tt)}\\[4ex]
\CASEL{l}{\SOME(c)}{y}{P_1}{P_2} \strans{\tau} P_1[c/y]\qquad\qquad\hfill \mbox{\small (Then)}\\[3ex]
\CASEL{l}{\NONE}{y}{P_1}{P_2} \strans{\tau} P_2\hfill \mbox{\small (Else)}\\[2ex]
%
\Inference{P_1\strans{\tau}P'_1}{P_1|P_2\strans{\tau}P'_1|P_2}\hfill \mbox{\small (Par-tau)}\\[4ex]
%
\Inference{P_1\strans{\OUT{c_1}{c_2}}P'_1}{P_1|P_2\strans{\OUT{c_1}{c_2}}P'_1|P_2}\ \mbox{ if } P_2 =\, !P'_2 \lor P_2 = {^l}\OUT{c'_1}{c'_2}P'_2\hfill \mbox{\small (Par-brd)}
\end{array}
$$
\textcolor{gray}{\hrule}
$$
\begin{array}{c}
\\[-.5ex]
\Judge{\OUT{c_1}{c_2}}{\IN{c_1}{x}}{[\SOME(c_2)/x]}
\qquad\qquad
\Judge{\OUT{c_1}{c_2}}{\IN{c_3}{x}}{\IN{c_3}{x}}\ \mbox{ if } c_1 \neq c_3\\[2ex]
%
%
\Inference{
\Judge{\OUT{c_1}{c_2}}{b_1}{b'_1}\quad \cdots\quad \Judge{\OUT{c_1}{c_2}}{b_n}{b'_n}
}{
\Judge{\OUT{c_1}{c_2}}{\AMP{q}{b_1,\dots,b_n}}{\AMP{q}{b'_1,\dots,b'_n}}
}
\end{array}
$$
\textcolor{mygray}{\hrule}
$$
\begin{array}{c}
\\[-2ex]
\CONF{\IN{t}{x}}{\fff}{[\NONE/x]}
\qquad\qquad
\CONF{[{\SOME}(c) / x]}{\ttt}{[{\SOME}(c) / x]}
\\[2ex]
\Inference{
\CONF{b_1}{r_1}{\theta_1}\quad
\cdots \quad\CONF{b_n}{r_n}{\theta_n}}{
\CONF{\AMP{q}{b_1,\cdots,b_n}}{r}{\theta_n \circ\cdots \circ \theta_1}
}
 \hbox{ where } r= \SEM{q}(r_1,\cdots,r_n)
\end{array}
$$
\hrule
\label{tab:semantics}
\end{table*}

It is worthwhile observing that substitutions are applied directly by the semantics, and since we consider closed processes, whenever an output $\OUT{c_1}{t}$ is ready to execute, $t$ must have been replaced by a name $c_2$ (cf. rule (Brd)). For this very same reason the evaluation of $\CASE{x}{y}{P_1}{P_2}$ is straightforward, as when the clause is reached in an actual execution the variable $x$ must have been replaced with a constant expression.

It is worthwhile observing that no rule for transition under restriction is provided, therefore rule (Sys) must be applied to pull restrictions to the outer-most level. Here $(\nu\overrightarrow{c})$ denotes the restriction of a list of names. 

Rules (Par-) take care of interleaving. In particular, rule (Par-brd) takes care of interleaving broadcast, and applies in two situations.

First, if the parallel component $P_2$ is a replicated process, the broadcast has precedence over any action of instances of $P_2$. This means that when $P_2$ has the form $!\IN{c_1}{x}.P$, the number of input performed by its replicated instances will depend on the unfolding performed previously using the structural congruence as per rule (Sys). Note that this ``freedom'' issue would not arise if the semantics made use of guarded recursion. Mind to observe, however, that the encoding of replication into recursion would lead to the same problem, since replication is essentially an unguarded recursion. 
%

Second, if the parallel component $P_2$ is broadcast-prefixed, its broadcast is delayed until the current one has exhausted its synchronisation opportunities. If none of the above applies, then either a synchronisation rule (In-) or an interleaving with a silent action (Par-tau) must take place.

\subsection{Security model}\label{sec:calculus-security}
On top of the standard operational semantics of the calculus, our process-algebraic specifications rely on a security interpretation of communication actions. As binders are blocking actions, they can be thought of as security checks that require the knowledge of some information to be fulfilled. This knowledge is abstracted here by resorting to the notion of {\it secure channel} and disregarding the messages actually communicated, in line with the value-passing nature of the calculus. This idea seamlessly applies to simple inputs and is refined by quality binders: the existential quality guard $\exists$ describes scenarios in which different ways of fulfilling a check are available, e.g., different ways of proving one's identity. In contrast to this, the universal quality guard $\forall$ describes checks that require a number of sub-conditions to be met at the same time, and can be used to refine a security mechanism in terms of sub-checks.

On the other hand, an output represents the satisfaction of the security check specified by the corresponding channel. As the semantics is broadcast, all the input waiting on the given channel will be satisfied, that is, all the pending instances of the security check will be fulfilled and the system will proceed until another blocking check is met. As a consequence, if the adversary can trigger a system component to make an output on a given channel $c$, it is as if $c$ were under the control of the attacker, for all inputs on $c$ in other components would be satisfied.

Finally, $\SYN{case}$ clauses allow to inspect how a given security check, that is, a preceding binder, has been satisfied, by inspecting which input variables are bound to some values, that is, on which channels the communication took place.

As a result, an attack can be construed as a set of channels, namely, those channels needed to fulfil the security checks on a path to the target. We assume that a system $P$ is deployed in a hostile environment, simulated by an adversary process $Q$ running in parallel with $P$. The ultimate aim of the protection analysis is to compute what channels $Q$ has to communicate over in order to drive $P$ to a given location $l$, i.e., $P|Q \Longrightarrow^* C[^{l}P']$, where $C[^{l}P']$ denotes a sub-process of $P|Q$ that has reached label $l$ and $\Longrightarrow^*$ denotes the reflexive and transitive closure of $\Longrightarrow$.

In order to compute the channels that $Q$ needs to reach a program point $l$, this security model is translated into an attacker model where $Q$ can {\it guess} every required channel, that is, whenever a security check on the way to $l$ cannot be avoided, $Q$ can fulfil it.

\section{The Danish NemID System}\label{sec:example}
Let us introduce now an example that will be developed throughout the paper. NemID (literally: EasyID)\footnote{\url{https://www.nemid.nu/dk-en/}} is an asymmetric cryptography-based log-in solution used by public institutions as well as private companies in Denmark to provide on-line services, used by virtually every person who resides in the country. Most service providers rely on a browser-based log-in application through which their customers can be authenticated (a Java applet or a JavaScript program -- in the following simply ``the applet''). For technological and historical reasons, the applet allows proving one's identity with various sets of credentials. In particular, private citizens can log-in with their social security number, password, and a one-time password, or by exhibiting an X.509-based certificate. Moreover, on mobile platforms that do not support Java, a user is authenticated through a classic id-password scheme.

We focus on a formalisation of the system such that authentication does not take place unless some interactions with the environment take place. In other words, we specify the security checks but not their fulfilment by the user who is supposed to be logging in, so as to rule out the legal way to authenticating and focus on malicious behaviours.

The system is modelled in the Value-Passing Quality Calculus as follows:
$$
{\it NemID} \triangleq (\nu\SYN{cert})\dots(\nu\SYN{access})(!{\it Login}\ |\ !{\it Applet}\ |\ !{\it Mobile})
$$
$$
\begin{array}{l}
{\it Applet} \triangleq \\
 \quad^{\lab{1}}\AMP{\exists}{\IN{\SYN{cert}}{x_{\SYN{cert}}},\AMP{\forall}{\IN{\SYN{id}}{x_{\SYN{id}}}, \IN{\SYN{pwd}}{x_{\SYN{pwd}}}, \IN{\SYN{otp}}{x_{\SYN{otp}}}}}.\\
 \quad\CAsel{\lab{2}}{x_{\SYN{cert}}}{y_{\SYN{cert}}}{^{\lab{3}}\OUT{\SYN{login}}{\SYN{ok}}}\ \SYN{else}\\
\quad \qquad \CAsel{\lab{4}}{x_{\SYN{id}}}{y_{\SYN{id}}}\\
\quad \qquad\qquad \CAsel{\lab{5}}{x_{\SYN{pwd}}}{y_{\SYN{pwd}}}\\
\quad \qquad\qquad\qquad \CASEL{\lab{6}}{x_{\SYN{otp}}}{y_{\SYN{otp}}}{^{\lab{7}}\OUT{\SYN{login}}{\SYN{ok}}}{\NIL}\\
\quad \qquad\qquad \casE{\NIL}\\
\quad \qquad \casE{\NIL}
\end{array}
$$
$$
\begin{array}{l}
{\it Mobile} \triangleq\ ^{\lab{8}}\AMP{\forall}{\IN{\SYN{id}}{x'_{\SYN{id}}}, \IN{\SYN{pin}}{x_{\SYN{pin}}}}.\\
\qquad\qquad\quad\CAsel{\lab{9}}{x'_{\SYN{id}}}{y'_{\SYN{id}}}\\
\qquad\qquad\qquad \CASEL{\lab{10}}{x_{\SYN{pin}}}{y_{\SYN{pin}}}{^{\lab{11}}\OUT{\SYN{login}}{\SYN{ok}}}{\NIL}\\ 
\qquad\qquad\quad \casE{\NIL}\\[2ex]
{\it Login} \triangleq\ ^{\lab{12}}\IN{\SYN{login}}{x}.^{\lab{13}}\OUT{\SYN{access}}{\SYN{ok}}
\end{array}
$$
The system consists of three processes running in parallel an unbounded number of times. For the sake of brevity, we have omitted to list all the restrictions in front of the parallel components, that involve all the names occurring in the three processes.

Process {\it Login} is in charge of granting access to the system: whenever a user is authenticated via the applet or a mobile app, an output on channel $\SYN{login}$ is triggered that can be received at label $\lab{12}$ leading to the output at label $\lab{13}$, which simulates a successful authentication.

Process {\it Applet} models the applet-based login solution, where login is granted (simulated by the outputs at labels $\lab{3}$ and $\lab{7}$) whenever the user exhibits a valid certificate or the required triple of credentials. The quality binder at label $\lab{1}$ implements such a security check: in order to pass the binder, either ($\exists$) a certificate has to be provided, simulated by the first sub-binder, or three inputs have to be received ($\forall$), mimicking an id ($\SYN{id}$), a password ($\SYN{pwd}$), and a one-time password ($\SYN{otp}$). Observe how $\SYN{case}$ constructs are used to determine what combination allowed passing the binder: at label $\lab{2}$ we check whether the certificate is received, and if this is not the case then we check that the other condition is fulfilled. The main abstraction of our approach takes place at this level, as we can only test whether something is received on a given channel, but we cannot inspect the content of what is received. In other words, the knowledge of channel $\SYN{cert}$ mimics the capability of producing a valid certificate, and thus we say that the semantic load of the communication protocol is shifted onto the notion of secure channel. Observe that this perspective seamlessly allows reasoning about the cost of attacking the system: to communicate over $\SYN{cert}$, an adversary has to get hold of a valid certificate, e.g., bribing someone or brute-forcing a cryptographic scheme, and this might prove more expensive than guessing the triple of credentials necessary to achieve authentication along the alternative path, as we shall conclude in \S~\ref{sec:quantifying-example}.

Finally, process {\it Mobile} describes the intended behaviour of the mobile login solution developed by some authorities (e.g., banks, public electronic mail system), where an id and a password or pin have to be provided upon login.

\section{Discovering Attacks}\label{sec:discovering}
The first challenge we tackle is discovering all the attacks leading to a given target. Given a process $P$ modelling the system of interest and a label $l$ in $P$ identifying the target of the attack, the task of the analysis is to find all the sets of channels fulfilling the inputs occurring in $P$ on the paths to $l$. In other words, we look for the knowledge that allows an adversary $Q$ to drive $P$ to reach $l$.

To this end, $P$ is translated into a set of propositional formulae (\S~\ref{sec:discovering-translation}), termed {\it flow constraints}, describing how the knowledge of channels relates to reachability of locations, and how given a set of channels some other channels can be derived by the attacker. Then, the constraints are extended to consider the capabilities of the attacker (\S~\ref{sec:discovering-attacker}). Each model of the final set of constraints $\bisystem$ contains a set of channels that under-approximates the knowledge required to reach $l$ (\S~\ref{sec:discovering-solution}).

\subsection{From processes to flow constraints}\label{sec:discovering-translation}
The call $\trprot{P}{\ttt}$ of the recursive function $\trprot{\cdot}{\cdot}$, defined in Table~\ref{tab:translation-clauses}, translates a process $P$ into a set of constraints of the form  $\varphi \llt \overline{p}$, where $\varphi$ is a propositional formula and $\overline{p}$ a positive literal. The intended semantics of a constraint states that if $Q$ knows (enough information to satisfy) $\varphi$, then $Q$ knows $p$, i.e., $\overline{p} = \ttt$. As we shall see below, the antecedent $\varphi$ accounts for the checks made on the path leading to disclosing $p$, namely input binders and $\SYN{case}$ clauses. The consequent $\overline{p}$ can either stand for a channel literal $\overline{c}$, meaning that $Q$ controls $c$, an input-variable literal $\overline{x}$, meaning that the attacker can satisfy the related input (i.e., $x = \SOME(c)$), or a label literal $\overline{l}$, meaning that $l$ is reached.

At each step of the evaluation, the first parameter of $\trprot{\cdot}{\cdot}$ corresponds to the sub-process of $P$ that has still to be translated, while the second parameter is a logic formula, intuitively carrying the hypothesis on the knowledge $Q$ needs to reach the current point in $P$. The translation function is structurally defined over processes as explained below.

If $P$ is $\NIL$, then there is no location to be attained and thus no constraint is produced. If $P =\ !P'$ or $P = \NEW{c}{P'}$, then it spontaneously evolves to $P'$, hence $Q$ does not need any knowledge to reach $P'$ and gains no knowledge since no communication is performed. A parallel composition is translated taking the union of the sets into which the components are translated.

Communication-related actions have instead an impact on the knowledge of $Q$: inputs represent checks that require knowledge, outputs fulfil those checks, and $\SYN{case}$ clauses determine the control flow. Assume that an action $\pi$ is reached in the translation under hypothesis $\varphi$, that is, $\trprot{^{l}\pi.P'}{\varphi}$. Then, a constraint $\varphi \llt \overline{l}$ is generated: if the attacker fulfils the security checks on a path to $l$, then $l$ is reached. In the logic interpretation of the constraints, this happens when $\varphi$ evaluates to $\ttt$ under a model given by the knowledge of the attacker, forcing $\overline{l}$ to be $\ttt$ (as standard implication would do). Moreover, the nature of action $\pi$ determines whether or not other constraints are produced and how the translation proceeds.

Consider a simple input $^l\IN{c}{x}.P'$: whenever the action is consumed, it must be that the attacker controls the communication channel $c$, hence we translate $P'$ under the hypothesis $\varphi \land \overline{c}$. Moreover, if the input is consumed, then $x$ must be bound to $\SOME(c')$, hence we produce a constraint $(\varphi \land \overline{c}) \llt \overline{x}$. These two steps respectively accommodate the hypothesis we need for passing a binder (success condition), and the conclusion we can establish whenever a binder is passed (strongest post-condition). In Table~\ref{tab:translation-clauses} functions $\hip$ and $\ths$ take care of formalising this intuition, that seamlessly applies to quality binders, where the hypothesis is augmented accounting for the combinations of inputs that satisfy the binder, as dictated by the quality guard $q$. The last section of Table~\ref{tab:translation-clauses} shows two cases for $q$, but any Boolean predicate can be used.

The execution of an output $^l\OUT{c}{t}$ satisfies all the security checks represented by inputs waiting on $c$. Therefore, if $Q$ can trigger such output, it obtains the knowledge related to $c$ without having to know the channel directly, and thus a constraint $\hp \llt \overline{c}$ is generated. It is worthwhile observing that this behaviour is justified by the broadcast semantics, and by the fact that the calculus is limited to testing  whether or not something has been received over a given channel, shifting the semantic load on the notion of secure channel. Moreover, note that the asymmetry between input and output is due to the fact that outputs are non-blocking.

A $\SYN{case}$ construct is translated by taking the union of the constraints into which the two branches are translated: as the check is governed by the content of the $\SYN{case}$ variable $x$, we record that the $\SYN{then}$ branch is followed only when $x$ is bound to $\SOME(c)$ by adding a literal $\overline{x}$ to the hypothesis, as we do for inputs, and we add $\neg \overline{x}$ if the $\SYN{else}$ branch is followed.\\
\begin{table*}[t]
\caption{The translation $\trprot{P}{\ttt}$ from processes to flow constraints.}
\hrule
$$
\begin{array}{l}
\begin{array}{lll}
\trprot{\NIL}{\hp} 			& = & \emptyset\\[.5ex]
\trprot{!P}{\hp}				& = & \trprot{P}{\hp}\\[.5ex]
\trprot{P_1|P_2}{\hp} 		& = & \trprot{P_1}{\hp} \cup \trprot{P_2}{\hp}\\[.5ex]
\trprot{\NEW{c}{P}}{\hp}		& = & \trprot{P}{\hp}\\[.5ex]
\trprot{^lb.P}{\hp}			& = & \trprot{P}{(\hp \land \hip(b))} \cup \ths(\hp,b) \cup \{\hp \llt \overline{l}\}\\[.5ex]
\trprot{^l\OUT{c}{t}.P}{\hp}	& = & \trprot{P}{\hp} \cup \{\hp \llt \overline{c}\} \cup \{\hp \llt \overline{l}\}\\[1ex]
\end{array}\\
\trprot{\CASEL{l}{x}{y}{P_1}{P_2}}{\hp}	= \trprot{P_1}{(\hp \land \overline{x})} \cup \trprot{P_2}{(\hp \land \neg \overline{x})} \cup \{\hp \llt \overline{l}\}
\end{array}
$$
\textcolor{mygray}{\hrule}
$$
\begin{array}{ll}
\begin{array}{l}
\hip(\IN{c}{x}) = \overline{c}\\[1ex]
\ths(\hp,\IN{c}{x}) = \{(\hp \land \overline{c}) \llt \overline{x}\}
\end{array}
&\quad
\begin{array}{l}
\hip(\AMP{q}{b_1,\dots, b_n}) = \SEM{q}(\hip(b_1),\dots,\hip(b_n))\\[1ex]
\ths(\hp,\AMP{q}{b_1,\dots, b_n}) = \bigcup_{i=1}^{n}\ths(\hp,b_i)
\end{array}
\end{array}
$$
\textcolor{mygray}{\hrule}
$$
\begin{array}{c}
\SEM{\forall}(\overline{c_1},\dots,\overline{c_n}) = \bigwedge_{i=1}^n \overline{c_i}
\qquad\qquad\qquad\qquad
\SEM{\exists}(\overline{c_1},\dots,\overline{c_n}) = \bigvee_{i=1}^n \overline{c_i}
\end{array}
$$
\hrule
\label{tab:translation-clauses}
\end{table*}

The set of constraints $\trprot{P}{\ttt}$ computed according to Table~\ref{tab:translation-clauses} can be normalised so as to produce a compact representation of $P$. Whenever two rules $\varphi \llt \overline{p}$ and $\varphi' \llt \overline{p}$ are in $\trprot{P}{\ttt}$, they are replaced with a single rule $(\varphi \lor \varphi') \llt \overline{p}$. This simplification is intuitively sound for if $\varphi$ leads to obtain $p$ and $\varphi'$ leads to obtain $p$, then $p$ is available to the attacker under the condition that $\varphi \lor \varphi'$ is known. In the following, we assume to deal with sets of constraints in such format.
	
\subsection{Modelling the attacker}\label{sec:discovering-attacker}
A rule $\varphi \llt \overline{p}$ in $\trprot{P}{\ttt}$ describes how $Q$ can attain $p$ playing according to the rules of the system, namely fulfilling the checks described by $\varphi$. Nonetheless, when $p$ is a channel, an attacker can always try to obtain it directly, for instance guessing some cryptographic keys or bursting a gate. In order to account for this possibility, we enrich each rule $\varphi \llt \overline{c}$ by replacing the antecedent with the disjunction $g_c \lor \varphi$, where literal $g_c$ (for ``guess $c$'') represents the possibility of learning $c$ directly. For each channel $c$ such that no rule $\varphi \llt \overline{c}$ is in $\trprot{P}{\ttt}$, we add to $\trprot{P}{\ttt}$ a constraint $g_c \llt \overline{c}$, expressing that $Q$ has no option but guessing the channel.

Finally, observe that having added the literals $g_c$, which tell how the attacker can get hold of a channel in any other way than those legal in $P$, interpreting the relation $\llt$ as the propositional bi-implication (equivalence) $\Leftrightarrow$ preserves the minimal models of the system of constraints. A constraint $\varphi \lor g_c \Leftrightarrow \overline{c}$ states that $c$ is only obtained by guessing or by making $P$ disclose it and, on the other hand, that if $c$ is known to the attacker it must be because they have guessed it or because they made $P$ disclose it.

In the following, given a label $l$ of interest, we shall write $\bisystem$ to denote the conjunction of constraints $\trprot{P}{\ttt}$ which have undergone the transformations mentioned above. In particular, in $\bisystem$
\begin{itemize}
\item $\overline{l}$ is a fact, expressing the query we want to study;
\item there is exactly one conjunct $\varphi \Leftrightarrow \overline{l}$;
\item for each channel $c$ occurring in $P$, there is exactly one conjunct $\varphi \Leftrightarrow \overline{c}$, and $\varphi$ has the form $g_c \lor \varphi'$ where $g_c$ does not occur elsewhere in $\bisystem$.
\end{itemize}
Intuitively, we assume that $l$ is reached and we look for the consequences in terms of truth values of channel literals (i.e., we look for \emph{implicants} of $\overline{l}$~\cite{DeMicheli}). In the following, we present a SAT-based solution to this problem.

\subsection{A SAT-based solution technique}\label{sec:discovering-solution}
A model $\mu$ of $\bisystem$ is a propositional assignment to the literals occurring in $\bisystem$ such that the formula evaluates to $\ttt$ (true). Such an assignment can be represented as a function mapping the literals in $\bisystem$, denoted $\domain{\mu}$, to truth values $\{\fff,\ttt\}$. Now, denoted by $\Names$ the set of names (channels) occurring in the process $P$ under study, then the set
$$
\attack(\mu) = \{\overline{c} \in \domain{\mu}\ |\ c \in \Names\, \land\, \mu(g_c) = \ttt \}
$$
identifies a set of channels that, if guessed, satisfies the constraints in $\bisystem$. In the semantics, $\attack(\mu)$ under-approximates a set of channels that fulfil the security checks on a path to $l$ in $P$, i.e., a way for $Q$ to drive $P$ to $l$. Denoted by $\allmodels{l}$ the set of all models of $\bisystem$, the corresponding set of sets of channels $\attack(\allmodels{l})$, obtained by point-wise application of $\attack$ to the elements of $\allmodels{l}$, contains under-approximations to all the attacks leading to $l$.

Hence, in order to solve the analysis we need essentially to compute all the models $\allmodels{l}$ of the propositional constraints $\bisystem$, that is, we have to solve the ALL-SAT problem for the input formula $\bisystem$. If no solution is found, i.e., $\bisystem$ is unsatisfiable, then the program point indicated by $l$ is not reachable. 

It is worthwhile observing that the translation into flow constraints over-approximates the behaviour of a process, giving rise to more executions than those actually arising in the semantics. Such spurious executions correspond to attacks that under-approximate the sets of channels required to reach the target $l$, and therefore the overall analysis results in an under-approximation. This intuition is formalised in the following correctness statement:
$$
\mbox{if}\quad P|Q \Longrightarrow^* C[{^lP'}]\quad \mbox{then}\quad \exists \mathcal{N} \in \attack(\allmodels{l})\ \mbox{s.t.}\ \mathcal{N} \subseteq \FC{Q}
$$
(where we slightly abuse the syntax writing $C[{^lP'}]$ for the sake of conciseness), i.e., for all the executions in which $Q$ drives $P$ to $l$, the analysis computes a set of channels $\mathcal{N} \in \Names$ that under-approximates the knowledge required of $Q$. The formulation of the actual theorem requires to establish some additional notation, and therefore is deferred to Appendix~\ref{sec:correctness}. In the following, we shall focus instead on an example process that pinpoints the imprecision of the analysis. 

A main source of over-approximation in the translation to flow constraints is the treatment of replication and restriction, whose interplay is simply disregarded by the analysis. As a matter of fact, a name restricted under replication is a different name in all the instances of the replicated process. Consider the following example:
$$
P \triangleq\ (\nu c)\left(   
\left(!\NEW{a}{^{\lab{1}}\IN{a}{x_a}.^{\lab{2}}\OUT{c}{c}}\right)
|
^{\lab{3}}\IN{c}{x_c}.^{\lab{4}}\IN{c}{x'_c}.^{\lab{5}}\dots
\right)
$$
Let label $\lab{5}$ be the location of interest and consider the following translation of $P$ into flow constraints, conveniently simplified for the sake of conciseness:
$$
(g_a \Leftrightarrow \overline{a})\  \land\  (g_c \lor \overline{a} \Leftrightarrow \overline{c})\  \land\  (\overline{c} \Leftrightarrow \overline{\lab{5}})\  \land\  (\ttt \Leftrightarrow \overline{\lab{5}})
$$
According to the translation, an attacker can reach label $\lab{5}$ either by knowing $a$ or by knowing $c$, that is, $\attack(\allmodels{\lab{5}}) = \{ \{a\}, \{c\}, \{a,c\}\}$.

Whilst it is clear that if the attacker guesses $c$ then the security checks at labels $\lab{3},\lab{4}$ can be satisfied, it is less obvious what it means for the attacker to guess $a$. In fact, if the adversary makes an output on $a$ twice, then two outputs on $c$ are triggered, and thus the checks on the path to the goal $\lab{5}$ are fulfilled. According to the semantics this may happen in all traces in which the replication is unfolded at least twice. Nonetheless, the two instances of $a$ in the two copies of the process are different: $\alpha$-renaming applies producing names $a$ and $a'$. Hence, by claiming that $\lab{5}$ can be reached by guessing $a$, the analysis under-approximates the knowledge required of the attacker, who needs $a$ and $a'$.
	
\subsection{Translating NemID}\label{sec:discovering-example}
The translation of the NemID system of \S~\ref{sec:example} returns the following flow constraints, augmented as explained above. For the sake of simplicity we omit the occurrences of $\ttt$ as a conjunct in all left-hand sides.
$$
\begin{array}{c}
\left.
\begin{array}{l}
\overline{\lab{1}}\\
\OSYN{cert} \Leftrightarrow \overline{x}_{\SYN{cert}}\\
\OSYN{id}	 \Leftrightarrow \overline{x}_{\SYN{id}}\\
\OSYN{pwd} \Leftrightarrow \overline{x}_{\SYN{pwd}}\\
\OSYN{otp} \Leftrightarrow \overline{x}_{\SYN{otp}}\\
\underbrace{\OSYN{cert} \lor (\OSYN{id} \land \OSYN{pwd} \land \OSYN{otp})}_{\varphi} \Leftrightarrow \overline{\lab{2}}\\
\varphi \land \overline{x}_{\SYN{cert}} \Leftrightarrow \overline{\lab{3}}\\
\varphi \land (\neg\overline{x}_{\SYN{cert}}) \Leftrightarrow \overline{\lab{4}}\\
\varphi \land (\neg\overline{x}_{\SYN{cert}}) \land \overline{x}_{\SYN{id}} \Leftrightarrow \overline{\lab{5}}\\
\varphi \land (\neg\overline{x}_{\SYN{cert}}) \land \overline{x}_{\SYN{id}} \land \overline{x}_{\SYN{pwd}} \land \overline{x}_{\SYN{otp}} \Leftrightarrow \overline{\lab{7}}\\
g_{\SYN{id}} \Leftrightarrow \OSYN{id}\\
g_{\SYN{pwd}} \Leftrightarrow \OSYN{pwd}\\
g_{\SYN{otp}} \Leftrightarrow \OSYN{otp}\\
g_{\SYN{cert}} \Leftrightarrow \OSYN{cert}
\end{array}
\right\}\begin{array}{c}\mbox{from }\\ \trprot{\it Applet}{\ttt}\end{array}
\end{array}
$$
$$
\begin{array}{ll}
\left.
\begin{array}{l}
\overline{\lab{8}}\\
\OSYN{id}	 \Leftrightarrow \overline{x'}_{\SYN{id}}\\
\OSYN{pin} \Leftrightarrow \overline{x}_{\SYN{pin}}\\
\OSYN{id} \land \OSYN{pin} \Leftrightarrow \overline{\lab{9}}\\
\OSYN{id} \land \OSYN{pin} \land \overline{x'}_{\SYN{id}} \Leftrightarrow \overline{\lab{10}}\\
\OSYN{id} \land \OSYN{pin} \land \overline{x'}_{\SYN{id}} \land \overline{x}_{\SYN{pin}} \Leftrightarrow \overline{\lab{11}}\\
g_{\SYN{pin}} \Leftrightarrow \OSYN{pin}
\end{array}
\right\}\begin{array}{c}\mbox{from }\\ \trprot{\it Mobile}{\ttt}\end{array}
\quad
&
\quad
\left.
\begin{array}{l}
\overline{\lab{12}}\\
\OSYN{login} \Leftrightarrow \overline{x}\\
\OSYN{login} \Leftrightarrow \overline{\lab{13}}\\
g_{\SYN{access}} \lor \OSYN{login} \Leftrightarrow \OSYN{access}
\end{array}\right\}\begin{array}{c}\mbox{from }\\ \trprot{\it Login}{\ttt}\end{array}
\end{array}
$$
\vspace{.3cm}
$$
\left(
g_{\SYN{login}} \lor 
\underbrace{\left(\varphi \land \overline{x}_{\SYN{cert}}\right)}_{\trprot{\it Applet}{\ttt}} \lor 
\underbrace{\left(\left(\varphi \land (\neg\overline{x}_{\SYN{cert}}) \land \overline{x}_{\SYN{id}} \land \overline{x}_{\SYN{pwd}} \land \overline{x}_{\SYN{otp}}\right)\right)}_{\trprot{\it Applet}{\ttt}} \lor 
\underbrace{\left(\left(\OSYN{id} \land \OSYN{pin} \land \overline{x'}_{\SYN{id}} \land \overline{x}_{\SYN{pin}}\right)\right)}_{\trprot{\it Mobile}{\ttt}}\right)
\vspace{.3cm}
\ \Leftrightarrow\ \OSYN{login}
$$
where the last formula combines the constraints that show the various ways to trigger an output on channel $\SYN{login}$. Notice how each formula models the checks on a given path: for being granted access, i.e., reaching label $\lab{13}$, a communicating process has to know channel $\SYN{login}$, as specified by the constraints derived from $\trprot{\it Login}{\ttt}$. In turn, the last formula describes what is needed in order to get hold of $\SYN{login}$, giving rise to a backward search procedure formalised in \S~\ref{sec:displaying}.

As for the solutions to the set of constraints, restricting to attacks we have:
$$
\begin{array}{lll}
\attack(\mu_{\mbox{\scriptsize triple}}) & = & \{\OSYN{id}, \OSYN{pwd}, \OSYN{otp}\}\\
\attack(\mu_{\mbox{\scriptsize cert}})   & = & \{\OSYN{cert}\}\\
\attack(\mu_{\mbox{\scriptsize mobile}}) & = & \{\OSYN{id}, \OSYN{pin}\}\\
\attack(\mu_{\mbox{\scriptsize login}}) & = & \{\OSYN{login}\}
\end{array}
$$
where the first two attacks come from the two ways of authenticating in the applet, the third from the mobile app, and the last one from guessing $\SYN{login}$, i.e., obtaining access in a way not encompassed by the system.

\subsection{Modularity and refinement}\label{sec:discovering-modularity}
Let us discuss now the role the analysis may take in a broader context, where it can be used to progressively refine the modelling of sub-systems which are revealed as potential attack targets, levering the modularity of process-algebraic specifications. 

Consider the NemID system and assume that a new way to access its services were offered by a service provider, which would authenticate a user via a phone number:
$$
\begin{array}{l}
{\it Phone} \triangleq\\\
^{\lab{14}}\IN{\SYN{phone}}{x_{\SYN{ph}}}.
\CASEL{\lab{15}}{x_{\SYN{ph}}}{y_{\SYN{ph}}}{^{\lab{16}}\OUT{\SYN{login}}{\SYN{ok}}}{\NIL}\\[2ex]
{\it NemID}' \triangleq (\nu\SYN{login})\dots(\nu\SYN{phone})\\
\quad\qquad\qquad (!{\it Login}\ |\ !{\it Applet}\ |\ !{\it Mobile}\ |\ !{\it Phone})
\end{array}
$$
Then we have  $\trprot{{\it NemID}'}{\ttt} = \trprot{\it NemID}{\ttt} \cup \trprot{\it Phone}{\ttt}$, that is, the translation of a new top-parallel process is independent from the formulae that have already been generated (but the quest for models has be re-computed with all the clauses). Obviously, due care has to be paid to names, e.g., the name $\SYN{login}$ in {\it Phone} has to be the same as the one used in process {\it NemID}. However, as we have seen, while restrictions play a crucial role in the semantics, they are simply ignored by the translation.

Besides being compositional with respect to the analysis of new components, the translation suitably integrates in a refinement cycle, where we start from a coarse abstraction of the system and then progressively refine those components that are revealed as candidates for being attacked, by replacing the corresponding set of formulae with a finer one. The constraint on names translates to a constraint on the interface of the component: if process $A$ is replaced by process $B$, then $B$ must be activated by the same inputs that activate $A$, and vice-versa, it must produce the same outputs towards the external environment that $A$ produces.

\section{Quantifying Attacks}\label{sec:quantifying}
The second challenge we tackle is quantifying attacks with respect to a general notion of cost. In the previous section we have presented a SAT-based solution technique, where each model of $\bisystem$ contains a set of channels that are necessary to fulfil an attack. Not all security mechanisms, however, offer the same protection guarantees, that is, not all channels are equal. A retinal scan can prove more difficult to bypass than a pin lock, and thus offer more protection. This is not the case, however, of an insider who is authorised to enter the corresponding room. A cost structure over channels facilitates formalising these considerations, and assigning costs to channels leads naturally to quantify sets of channels, that is, attacks.

The characterisation of attacks in terms of cost allows to order them, and ultimately to focus on those which are deemed the most likely given our understanding of the candidate attacker profiles. Moreover, the {\it minimal} cost of reaching a given location $l$ identifies the protection deployed to guard $l$ in the implementation, which can be contrasted with the desiderata of the specification. The higher the cost for the attacker, the higher the protection guarding the target.

In the following, we extend the qualitative analysis of \S~\ref{sec:discovering} to a quantitative setting. In particular, we should follow a modular approach according to which cost considerations are developed on top of the structure of the original analysis. The benefit of a layered strategy is two-fold: on the one hand, we present a technique that can be exploited to transform a great many qualitative analyses into quantitative analyses; on the other hand, whenever quantitative information about the entities in question is not available, we can resort to the qualitative solution.

\subsection{Security labels}\label{sec:quantifying-labels}
We have already seen how the calculus is instrumented with labels so as to refer to locations of interest. Such labels can be levered to build a {\it security lattice} $(\Sigma = \{\sigma_1,\dots,\sigma_n \}, \sqsubseteq_{\Sigma})$ specifying the desired protection deployed to guard each location. The set $\Sigma$ is equipped with the greatest lower bound operator $\sqcap_{\Sigma}$, and we assume to have a function $\security\! : \mathcal{L} \rightarrow \Sigma$ that maps labels into security levels. In particular, $\security(l_1) \sqsubseteq_{\Sigma} \security(l_2)$ denotes that the need for protection of the program point indicated by $l_2$ is greater than or equal to the need for protection of the program point indicated by $l_1$.

As an example of a security lattice, consider the \emph{military lattice} given by 
$$
\Sigma = \{\SYN{unclassified}, \SYN{confidential}, \SYN{secret}, \SYN{top}\mbox{-}\SYN{secret}\}
$$
with the ordering $\SYN{unclassified}$ $\sqsubset_{\Sigma}$ $\SYN{confidential} \sqsubset_{\Sigma}$ $\SYN{secret} \sqsubset_{\Sigma} \SYN{top}$-$\SYN{secret}$. More complex lattices, in particular non-linearly-ordered ones, are discussed in~\cite[Ch.~7]{Amoroso1994}.

In the NemID example of \S~\ref{sec:example}, we have observed that the system authenticates a user at label $\lab{13}$. In terms of security levels, we can rely on a simple security lattice ${\sf unrestricted} \sqsubset_{\Sigma} {\sf restricted}$, where $\security(\lab{13}) = {\sf restricted}$. Similarly, we would like to have the highest security level for labels $\lab{2}$ to $\lab{7}$ in process {\it Applet} and $\lab{9}$ to $\lab{11}$ in process {\it Mobile}. In fact, such labels are reached after fulfilling the security checks at labels $\lab{1}$ and $\lab{8}$, respectively, which have instead level ${\sf unrestricted}$, as label $\lab{12}$ in process {\it Login}.

\subsection{From qualitative to quantitative considerations}\label{sec:quantifying-costs}
Let $\cost$ be a function from channels $c \in \Names$ to costs $k \in \mathcal{K}$. Formally, we require $(\mathcal{K},\oplus)$ to be a commutative monoid (also known as Abelian monoid), that is, $\oplus$ is an associative and commutative binary operation on the set $\mathcal{K}$ and has an identity element. Moreover, we require $\mathcal{K}$ to be equipped with a partial order $\sqsubseteq_{\mathcal{K}}$, such that $(\mathcal{K}, \sqsubseteq_{\mathcal{K}})$ is a lattice, and $\oplus$ to be extensive, that is, the sum of two elements always dominates both the summands: 
\begin{equation}
\label{eq:extensive}
\forall k_1,k_2 \in \mathcal{K}\ .\ k_1 \sqsubseteq_{\mathcal{K}} (k_1 \oplus k_2)\ \land\ k_2 \sqsubseteq_{\mathcal{K}} (k_1 \oplus k_2)
\end{equation}
Finally, we assume that $\oplus$ is monotone and the least element $\bot \in \mathcal{K}$ is its identity element, that is, $\oplus$ is an upper bound operator of the lattice $(\mathcal{K}, \sqsubseteq_{\mathcal{K}})$, and therefore satisfies condition \eqref{eq:extensive}. For the sake of simplicity, we assume that the costs of channels are independent.

For the sake of simplifying the notation, in the following we shall feel free to apply the function $\cost$ to sets of names, according to the following definition:
$$
\begin{array}{l}
\cost: \mathcal{P}(\Names) \rightarrow \mathcal{K}\\
\cost(\{c_1,\dots,c_n\}) = \bigoplus_{i=1}^{n}\cost(c_i)
\end{array}
$$
Likewise, we extend the function $\cost$ also to sets of sets of names by point-wise application:
$$
\begin{array}{l}
\cost: \mathcal{P}(\mathcal{P}(\Names)) \rightarrow \mathcal{P}(\mathcal{K})\\[2ex]
\cost(\{c^1_1,\dots,c^1_{n_1}\},\dots,\{c^m_1,\dots,c^m_{n_m}\}) =\\[1ex]
\qquad  \big\{\cost(\{c^1_1,\dots,c^1_{n_1}\}), \dots, \cost(\{c^m_1,\dots,c^m_{n_m}\}) \big\}
\end{array}
$$
Therefore, given a model $\mu$ of $\bisystem$, the corresponding set $\attack(\mu) = \{c_1,\dots,c_n \}$ can be quantified as 
$$
\cost(\attack(\mu)) = \bigoplus_{c \in attack(\mu)} \cost(c)
$$
It is worthwhile noticing that this Boolean approach implies that multiple occurrences of the same basic action refer in fact to the same instance of the action, that is, to the same security mechanism, which once bypassed is bypassed ``forever''. Multiple instances of the same security mechanism must be represented with different names. This observation is intimately related to the over-approximating nature of the qualitative analysis discussed in \S~\ref{sec:discovering-solution}, where we have already mentioned that restrictions and replications are simply ignored.

As we mentioned above, however, a conservative approach to security would consider the attacks of {\it minimal} cost. Hence, given two attacks, i.e., two distinct models $\mu,\mu'$, we would discard $\mu'$ in case $\cost(\attack(\mu)) \sqsubset_{\mathcal{K}} \cost(\attack(\mu'))$. We can thus restrict the set of models $\allmodels{l}$ to the ones bearing attacks of minimal cost:
$$
\minimal(\allmodels{l}) = \left\{ \mu \in \allmodels{l}\, |\, \forall \mu' \in \allmodels{l} . \cost(\attack(\mu')) \not\sqsubset_{\mathcal{K}} \cost(\attack(\mu)) \right\}
$$
It is worthwhile noticing that $\minimal(\allmodels{l})$ may contain more than one model, as $(i)$ we consider all the attacks with same cost and $(ii)$ some attacks may have incomparable costs in case the cost set is not linearly ordered.

Now, we can relate an attack to the corresponding security level $\sigma \in \Sigma$ required to counter it by means of a function $\level\!: \mathcal{K} \rightarrow \Sigma$, compressing cost regions into security levels:
$$
\level(k) = 
\begin{cases}
\sigma_1 & \SYN{if}\ k \in \{k^1_1,\dots, k^1_{h_1}\}\\
\vdots\\
\sigma_m & \SYN{if}\ k \in \{k^m_1,\dots, k^m_{h_m}\}
\end{cases}
$$
where $\level$ is a well-defined function if the sets of costs $\{k^i_1,\dots, k^i_{h_i}\}$ are pair-wise disjoint and their union is $\mathcal{K}$. Moreover, it is natural to require that $\level$ is monotone. A simple example in the cost set $(\mathbb{N},+)$ and security lattice $\SYN{low} \sqsubset \SYN{medium} \sqsubset \SYN{high}$ is given by the choice
$$
\level(k) = 
\begin{cases}
\SYN{low} & 		\SYN{if}\; k \leq 1024\\
\SYN{medium} & 	\SYN{if}\; 1024 < k \leq 2048\\
\SYN{high} &		\SYN{if}\; 2048 < k
\end{cases}
$$
where numbers could represent the length of cryptographic keys, and we state for instance that a program point is poorly protected if no more than 1024 bits are necessary to attain it (for a fixed cryptosystem).

Finally, we extend $\level$ to work on sets of costs so as to encompass all the sets of channels produced by the analysis at once:
$$
\begin{array}{l}
\level: \mathcal{P}(\mathcal{K}) \rightarrow \mathcal{P}(\Sigma)\\
\level(\{k_1,\dots,k_n\}) = \{\level(k_1),\dots,\level(k_n) \}
\end{array}
$$
where the input $\{k_1,\dots,k_n\}$ is the set of costs of all minimal attacks, computed as
$$
\cost(\attack(\minimal(\allmodels{l})))
$$
Finally, the greatest lower bound $\sqcap_{\Sigma}$ is used to derive the greatest security level compatible with all attacks in $\minimal(\allmodels{l})$, that is, the protection of a program point corresponds at most to the cost of the weakest path leading to it.

%
%

A graphical illustration of the various components of the analysis is displayed in Fig.~\ref{fig:analysis}, where it is apparent how the quantitative analysis is built on top of the qualitative analysis. Intuitively, function $\security$ is the \emph{specification} expressing the target security architecture of a system with respect to a given security lattice, while $\level(\cost(\attack(\minimal(\allmodels{l}))))$ captures (an under-approximation of) how this architecture has been realised in the \emph{implementation}.

The overall aim of the analysis, i.e., checking whether the deployed protection lives up to the required security level, can thus be expressed by the property
$$
\forall l \in \mathcal{L}\ .\ \security(l) \sqsubseteq_{\Sigma} \level(\cost(\attack(\minimal(\allmodels{l}))))
$$
A violation of this condition is referred to as a potential \emph{inversion of protection}. The overall under-approximation of the analysis is the result of minimising over the costs of under-approximating sets of channels (cf. \S~\ref{sec:discovering-solution}).
\begin{figure}[t]
\centering
\begin{tikzpicture}[->,>=stealth',node distance=1.5cm]
  \node (n4) [] {$\mathcal{L}$};  
  \node (n0) [above=6.4cm of n4] {$\stackrel{?}{\sqsubseteq_{\Sigma}}$};    
  \node (nGhost) [above left of=n4] {};    
  \node (n3) [above right of=n4] {$\bisystem$};
  \node (n3top) [above of=n3] {$\allmodels{l}$};
  \node (n2) [above of=n3top] {$\mathcal{P}(\mathcal{K})$};
  \node (n1b) [above of=n2] {$\mathcal{P}(\Sigma)$};
  \node (n1a) [above of=n1b] {$\sigma'$};
  \node (n1)  [above=5.56cm of nGhost] {$\sigma$};
  \node [rotate=90] at (-4,3) (spec) {s p e c i f i c a t i o n};
  \node [rotate=90] at (5,3)  (imp) {i m p l e m e n t a t i o n};
\path (n4) 	edge[left] node {\footnotesize $\security$} (n1);
\path (n4)	edge[right,very near start] node {\footnotesize $\, \trprot{P}{\ttt}$} (n3);
\path (n3) 	edge[right] node {{\footnotesize ALL-SAT}} (n3top);
\path (n3top) 	edge[right] node {\footnotesize $\cost \circ \attack \circ \minimal$} (n2);
\path (n2) 	edge[right] node {\footnotesize $\level$} (n1b);
\path (n1b) 	edge[right] node {\footnotesize $\bigsqcap_{\Sigma}$} (n1a);
\end{tikzpicture}
\caption{The quantitative protection analysis at a glance for a fixed process $P$.}
\label{fig:analysis}
\end{figure}
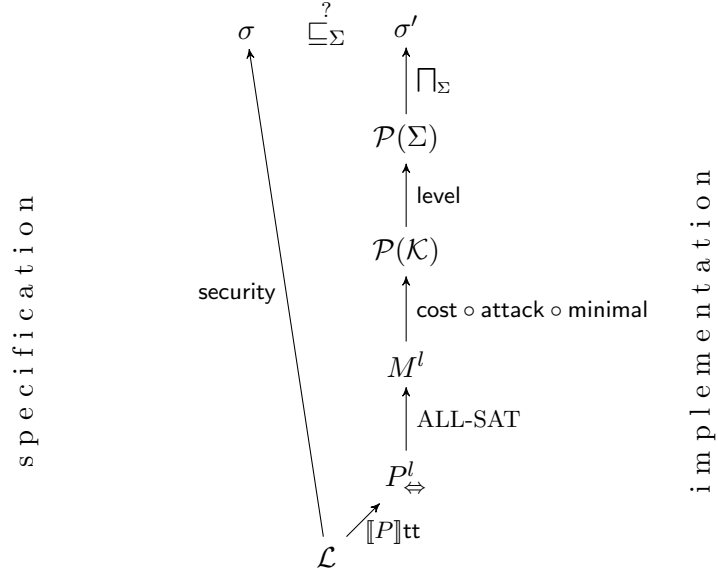

\subsection{Optmisation Modulo Theories}\label{sec:quantifying-solution}
In order to compute the set of sets of channels $\attack(\minimal(\allmodels{l}))$ that allow reaching $l$ incurring minimal costs, we need to solve an optimisation problem subject to the Boolean constraints $\bisystem$. There exist various techniques to cope with such problems, each suitable for particular choices of cost sets and objective functions. One solution is to first compute $\allmodels{l}$ and then minimise it by comparing models as explained above. Nonetheless, cost information can be levered to skip non-optimal models during the search, hence improving the performance. In the following, we show how to exploit an SMT solver to tackle the problem in its most general form. We limit to mention that linear programming techniques such as Pseudo-Boolean optimisation~\cite{Boros2002} are efficient alternatives for dealing with the monoid $(\mathbb{Z},+)$ and linear objective functions.

In a nutshell, our task reduces to compute models of $\bisystem$ containing attacks of minimal cost in the lattice $\mathcal{K}$. In other words, we are looking for {\it prime} implicants of $\overline{l}$~\cite{DeMicheli}, where primality is sought with respect to the given cost set.

Such an optimisation problem can be tackled by computing models for a list $\Pi_1,\dots,\Pi_n$ of SMT problems, where $\Pi_i$ is a more constrained version of $\Pi_{i-1}$ that requires to improve on the cost of the current solution. The initial problem $\Pi_1$ consists of the propositional constraints $\bisystem$ and of the objective function, whose value on the current model is stored in variable $\SYN{goal}$.

The objective function is essentially the cost of the current model. In order to compute $\cost(\attack(\mu))$ into variable $\SYN{goal}$ as part of $\mu$ itself we define:
$$
\SYN{goal} := \bigoplus_{i=1}^{n} ({\sf if}\ g_{c_i}\ {\sf then}\ \cost(c_i)\ {\sf else}\ \bot)
$$
where we combine the costs of all the channels that must be guessed, that is, the channels $c_i$'s such that the corresponding guessing literal $g_{c_i}$ is found to be $\ttt$. Otherwise, if a $g_{c_i}$ is $\fff$, then the corresponding $c_i$ needs not be guessed and its cost does not contribute to the cost of the attack. Recall that the least element $\bot$ of the cost lattice $\mathcal{K}$ does not contribute any cost, for it is the neutral element with respect to the cost combinator $\oplus$. Hence, by construction we have $\mu(\SYN{goal}) = \cost(\attack(\mu))$.

Then, while the problem is satisfiable, we improve on the cost of the current model by asserting new constraints which tighten the value of $\SYN{goal}$, until unsatisfiability is reported. Algorithm~\ref{alg:smt-loop} displays the pseudo-code of the procedure. In particular, observe that when a new problem $\Pi_i$ is generated, additional constraints are asserted that ask for $(i)$ a different model and $(ii)$ a non-greater cost: the former condition speeds up the search, while the latter explores the cost frontier.

The termination of the algorithm is ensured by the finiteness of possible models to the propositional variables of the $\Pi_i$'s, and by the fact that the same model cannot occur twice as solution due to the new constraints we generate in each iteration. At most, we need to solve as many $\Pi_i$'s as there are models of $\bisystem$, which coincide with the qualitative analysis (ALL-SAT). The correctness of the procedure stems from the fact that when unsatisfiability is claimed, by construction of the $\Pi_i$'s there cannot exist further models that comply with the cost constraints.
\begin{algorithm}[t]
 \KwData{The problem $\Pi \triangleq \bisystem \land \left( \SYN{goal} := \bigoplus_{i=1}^{n} ({\sf if}\ g_{c_i}\ {\sf then}\ \cost(c_i)\ {\sf else}\ \bot)\right)$}
 \KwResult{the set $\mathcal{M}$ of pairs $(\mu,\cost(\attack(\mu)))$ such that $\mu \in \minimal(\allmodels{l})$}
  $\mathcal{M} \leftarrow \emptyset$\;
 \While{$\Pi$ satisfiable}{
  $\mu$ $\leftarrow$ ${\sf get}$-${\sf model}(\Pi)$\;
  $k$ $\leftarrow$ $\mu(\SYN{goal})$\;
 \ForAll(\tcp*[f]{$\mu$ outperforms $\mu'$}){$(\mu',k') \in \mathcal{M}\ |\ k \sqsubset_{\mathcal{K}} k'$}{
       $\mathcal{M}$ $\leftarrow$ $\mathcal{M} \setminus \{(\mu',k')\}$
    }
  $\mathcal{M}$ $\leftarrow$ $\mathcal{M} \cup \{(\mu,k)\}$\;
  $\Pi$ $\leftarrow$ $\Pi \land \neg(\bigwedge_{i=1}^{n}(\overline{c_i} = \mu(\overline{c_i}))) \land \neg({\sf goal} \sqsupset_{\mathcal{K}} k)$\;
 }
\caption{The SMT-based solution procedure.}
\label{alg:smt-loop}
\end{algorithm}

It is worthwhile noticing how resorting to propositional logic integrates with the overall under-approximating nature of the analysis: a channel can either be learnt or not, and its cost contribute or not to the cost of an attack. This means that we do not keep track of the number of attempts made to guess some information, and always assume that guessing $c$ is successful whenever $g_c$ is found to be true. In order words, for a given cost set, we are considering the luckiest or cleverest attacker. As for the cost set (comparing and combining costs), SMT solvers offer native support for numeric costs and common mathematical functions, while more complex cost sets have to be encoded manually.

Finally, observe that the procedure above is not dependent on our analysis, but can be generally exploited to find optimal models of arbitrary logic formulae in arbitrary cost sets, and can be extended seamlessly to more complex logics.

\subsection{Attacking NemID}\label{sec:quantifying-example}
Consider the NemID system discussed in \S~\ref{sec:example}. There are several techniques for quantifying the cost of guessing secret information. {\it Quantification of information leakage}~\cite{Biondi2013} is an information theory-based approach for estimating the information an adversary gains about a given secret $s$ by observing the behaviour of a program parametrised on $s$. If $s$ is quantified in bits, then the corresponding information leaked by the program is quantified as the number of bits learnt by the adversary by observing one execution of the system. For instance, consider a test program $T$ parametrised on a secret password. $T$ inputs a string and answers whether or not the password is matched. Under the assumptions that the adversary knows the program and the length of the secret (no security-by-obscurity), we can estimate the knowledge gained by the adversary after one guessing attempt.

We leverage QUAIL~\cite{Biondi2013a}, a freely-available tool for quantifying information leakage, for determining costs to channels. Denoted ${\sf leak}(T,s)$ the leakage of $T$ on a secret $s$ as computed by QUAIL, we quantify the strength of a channel $c$ of $n$ bits as 
$$
\SYN{cost}(c) = \frac{n}{{\sf leak}(T,c)}
$$
where we assume the security offered by $c$ to be uniformly distributed over the $n$ bits. In this settings we are thus working in the cost monoid $(\mathbb{Q},+)$.

In our running example, the secrets to be guessed are $\SYN{pwd}$, $\SYN{otp}$, $\SYN{cert}$, $\SYN{pin}$, while we assume that $\SYN{id}$ is known to the attacker and thus has cost $0$ (in particular, in the NemID system is not difficult to retrieve such id, corresponding to the social security number of an individual). Moreover, we know that $\SYN{pwd}$ contains between 6 and 40 alphanumeric symbols and it is not case sensitive: assuming an average length of 10 symbols, given that there are 36 such symbols, we need $5.17$ bits to represent each symbol, for a total length of $52$ bits. Analogously, we determine the length of $\SYN{otp}$ as $20$ bits, while the length of the pin depends on the service provider: in case of a major bank it is just $14$ bits. As for the certificate, the authority is following NIST recommendations, using 2048-bit RSA keys for the time being, and for the sake of simplicity we assume that guessing an RSA key cannot be faster than guessing each of the bits individually. Finally, we disregard $\SYN{login},\SYN{access}$ by assigning them the least upper bound of the costs of all the other channels. Names used only as messages can be disregarded by assigning them cost $0$, as they do not influence an attack. Exploiting QUAIL and the formula defined above, we obtain the following $\cost$ map:
$$
\begin{array}{ll}
\cost(\SYN{pwd}) = 4.4\times 10^{15} & \cost(\SYN{pin})= 1.5\times 10^{4}\\
\cost(\SYN{otp}) = 10^6 & \cost(\SYN{cert}) = 3.4\times10^{616}
\end{array}
$$
Thus, limiting our attention to the set $\Names$ of channels occurring in the process {\it NemID}, the problem is to minimise
$$
\sum_{c \in \Names} \left({\sf if}\ g_c\ {\sf then}\ \cost(c)\ {\sf else}\ 0\right) 
$$
under the constraints given by $P_{\Leftrightarrow}^{\lab{13}}$. Instructed with this input, our procedure finds that the formula is satisfiable and the single cheapest model $\mu$ contains 
$$
g_\SYN{id}\mapsto \ttt\qquad g_\SYN{pwd}\mapsto \fff\qquad g_\SYN{mail}\mapsto \fff\qquad g_\SYN{pin}\mapsto \ttt\qquad g_\SYN{login}\mapsto \fff
$$
entailing $\attack(\minimal(\allmodels{\lab{13}})) = \{\{\SYN{id},\SYN{pin}\}\}$ and $\cost(\attack(\mu)) = \cost(\SYN{id}) + \cost(\SYN{pin}) = 1.5\times 10^{4}$.

As for the desired security levels, we observed in \S~\ref{sec:example} that $\security(\lab{13}) = \SYN{restricted}$ should hold. It is desirable to take as touchstone the protection offered by the applet, for it is standard among all service providers. The protection offered by the applet is the minimum between the cost of guessing a certificate and the cost of guessing the triple of credential, that is, $4.4\times 10^{15}+10^6$, therefore we should set
$$
\level(k) = 
\left\{
\begin{array}{ll}
\SYN{restricted}\ &  \SYN{if}\ k \geq 4.4\times 10^{15}+10^6\\
\SYN{unrestricted}\ &  \SYN{otherwise}
\end{array}
\right.
$$

We would like to verify that $\SYN{restricted} \sqsubseteq_{\Sigma} \level(\cost(\attack(\allmodels{\lab(13)})))$, which is false. Hence, it is the case that the implementation potentially guarantees less protection than the amount required by the specification, and thus we shall issue a warning to the designer of the system.

Finally, it is worthwhile noticing that the framework allows measuring the \emph{distance} between the implementation and the specification, and not only their qualitative compliance.

The analysis suggests that the most practicable way to break the authentication protocol is attacking the mobile app, as long as we believe that our cost map is sensible. For instance, a cryptographer would deem our assumptions on breaking RSA utterly unrealistic. In the following, we will discuss an effective alternative to the definition of numeric cost sets.

\subsection{Complex cost structures}\label{sec:quantifying-complex-costs}
So far we have worked with an example in the cost set $(\mathbb{Q},+)$, for it is natively encoded into SMT solvers and matches a first intuition of the notion of cost. Nonetheless, it is often difficult to provide an absolute estimate of the strength of a protection mechanism: sometimes different mechanisms are even incomparable, as cryptography and physical security might be. In such cases, it is more natural to describe the relative strength of a set of mechanisms with respect to each other. This is achieved by computing the analysis over symbolic and partially-ordered cost sets. Observe that Algorithm~\ref{alg:smt-loop} is already equipped to cope with the general problem of optimising on such cost sets.

As a basic example, consider the cost lattice displayed in Fig.~\ref{fig:cost-set}: we could characterise the cost of obtaining given information as $\SYN{cheap}$, if it does not require a specific effort, as $\SYN{cpu}$, if it requires significant computational capabilities (e.g., breaking an encryption scheme), as $\SYN{enrg}$, if it requires to spend a considerable amount of energy (e.g., engaging in the wireless exchange of a number of messages), or as $\SYN{expensive}$, if it requires both computations and energy. In order to combine such costs, a suitable choice is to take as monoid operator $\oplus$ the least upper bound $\sqcup$ of two elements in the cost lattice.

An interesting case of non-linear cost sets is offered by the study of security in Cyber-Physical Systems, where components combine both software and physical features~\cite{Vigo2012}. In particular, in such systems an attack could require to assemble cyber actions with physical tampering, whose costs can either be comparable or not depending on the nature of the quantities we are interested in (for instance, energy and memory are not directly comparable). 
\begin{figure}[t]
\centering
\begin{tikzpicture}[node distance=1.5cm]
	\node(zero) {{\sf cheap}};
	\node(one) [above left of=zero] {{\sf cpu}};
	\node(two) [above right of=zero] {{\sf enrg}};
	\node(up) [above left of=two] {{\sf expensive}};			

	\path (zero) edge (one)
   	    	(zero) edge (two);
	\path (one) edge (up);
	\path (two) edge (up);
\end{tikzpicture}
\caption{The Hasse diagram of a partially-ordered cost structure.}
\label{fig:cost-set}
\end{figure}
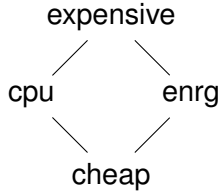

In conclusion, three elements push independently for the comprehensive SMT-based approach: the non-linearity of the cost set, its symbolic nature, and the non-linearity of the objective function.

\subsection{A semantic interpretation of guessing}\label{sec:quantifying-restriction}
It would be possible to formulate a neat semantic interpretation of the guessing capability of the attacker. Assume to deal with processes of the form $(\NEW{\overrightarrow{c}}{P})|Q$, where the first component is the system under study, in which all restrictions are at the outer-most level, and $Q$ is the attacker. Now, for $Q$ to interact with $P$, the attacker needs to move inside the scope of some restrictions so as to share some channel names with $P$. Whenever $Q$ enters the scope of a restriction $(\nu c)$, the name $c$ is guessed. In order to account for the cost $k \in \mathcal{K}$ of guessing a name, we can instrument each restriction with the corresponding cost, writing $(\nu^k c)$, and then augment the scope extension rule of Table~\ref{tab:congruence} so as to accumulate the cost of names that are guessed. The standard semantics of restriction used in security applications of process calculi, according to which a new name $c$ is only known to legal participants unless leaked ($P$, in our case), is encompassed by assigning $c$ an infinite cost.

Though possible, such an extension of the semantics is not necessary to prove the correctness of the analysis. Every assignment that satisfies the propositional constraints leads $Q$ to reach the location $l$ of interest, hence also the ones of minimal costs. Nonetheless, it is worthwhile noticing that such a quantitative point of view on restrictions generalises the distinction between the operators $\SYN{new}$ and $\SYN{hide}$ introduced in the \emph{secret} $\pi$-calculus~\cite{Giunti2012}. The operator $\SYN{hide}\, c$, which introduces a name $c$ inhibiting its scope extension, would correspond to $(\nu^{\infty} c)$, while we would have a more fine-grained view  on plain scope extension.

\section{Displaying Attacks}\label{sec:displaying}
We shall now embark in the last challenge of ours, that is, obtaining graphical representations of possible attacks that foster communicating effectively security information to non-experts, as these often are those in charge of taking decisions. We shall do this by means of attack trees, a widely-recognised tool for showing how a goal is attained in terms of combination of sub-goals.

Our developments on attack trees encompass both the qualitative analysis of \S~\ref{sec:discovering} and the quantitative extension of \S~\ref{sec:quantifying}. On the one hand, an attack tree contains {\it all} the attacks for a given target; on the other hand, a tree is construed as a propositional formula whose optimal models can be computed by means of the procedure of \S~\ref{sec:quantifying-solution}.

Attack trees are a widely-used graphical formalism for representing threat scenarios, as they appeal both to scientists, for it is possible to assign them a formal semantics, and to practitioners, for they convey their message in a concise and intuitive way (cf. references in \S~\ref{sec:related-trees}). In an attack tree, the root represents a target goal, while the leaves contain basic attacks whose further refinement is impossible or can be neglected. Internal nodes show how the sub-trees have to be combined in order to achieve the overall attack, and to this purpose propositional conjunction and disjunction are usually adopted as combinators. On top of this basic model, a number of extensions and applications of attack trees have been proposed, demonstrating how flexible and effective tool they are in practice. Figure~\ref{fig:tree-ex} displays a simplistic attack tree, where the overall goal of entering a bank vault is obtained by either bribing a guard or by stealing the combination and neutralising the alarm.
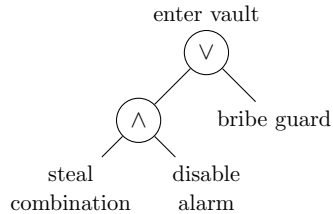
\begin{figure}[t]
\centering
\begin{tikzpicture}[node distance=1.5cm,scale=0.9,every node/.style={scale=0.85}] 
  \node[circle,draw,label=above:{\small enter vault}] (n1) [] {$\lor$};
  \node[circle,draw] (n2)  [below left of=n1] 	{$\land$};
  \node	 (n3)  [below right of=n1] 	{\small bribe guard};
  \node[align=center]	 (n2a) [below left of=n2] 		{\small steal\\ \small combination};
  \node[align=center]	 (n2b) [below right of=n2] 		{\small disable\\ \small alarm};
\path[-](n1) 	edge node {} (n2)
				edge node {} (n3)
		(n2) 	edge node {} (n2a)
				edge node {} (n2b);
\end{tikzpicture}
\caption{How to enter a bank vault, for dummies.}
\label{fig:tree-ex}
\end{figure}

\subsection{Synthesising attack trees}\label{sec:displaying-inferring}
In the following, we shall rely on the translation $\trprot{P}{\ttt}$ devised in Table~\ref{tab:translation-clauses} but embrace a slightly different interpretation so as to allow explicitly generating trees. In particular, we replace bi-implications in $\bisystem$ with implications and we ignore literals $g$, obtaining a set of constraints denoted $\rightsystem$. In order to obtain attack trees as commonly defined in the literature, in the following we assume that all the quality guards $q$ in the process under study are linear.

Given a process $P$ and a label $l$ occurring in $P$, we generate a formula $\DENOTATION{l}$ representing the attacks reaching $l$ by backward chaining the formulae in $\rightsystem$ so as to derive $\overline{l}$. It is central to observe that the procedure re-establishes the original system of bi-implications thus guaranteeing the correctness of the analysis in terms of compatibility with the developments of \S~\ref{sec:discovering}.

Before explaining the algorithm, it is worthwhile discussing the nature of the backward chaining-like procedure defined in the following. Standard backward chaining~\cite[Ch.~7]{Russell2009} combines Horn clauses so as to check whether a given goal follows from the knowledge base. Instead, we are in fact trying to derive all the knowledge bases that allow inferring the goal given the inference rules $\rightsystem$, which are not strict Horn clauses as they can contain more than one positive literal. The backward-chaining point of view stresses the relationship of our problem to the quest for implicants of $\overline{l}$, as we have already observed.

The rules for generating $\DENOTATION{l}$ are displayed in Table~\ref{tab:translation-trees}. For our formulae are propositional, there is no unification other than syntactical identity of literals involved in the procedure. Notice that the algorithm only applies valid inference rules.

Rule (Sel) selects the antecedent of the formula leading to the goal $\overline{l}$: since there is a unique such rule, in order to derive $\overline{l}$ we have to derive the antecedent $\varphi$ of $\varphi \Rightarrow \overline{l}$. Observe that we are not interested in deriving $\overline{l}$ in any other way: for $\overline{l}$ is derived assuming $\varphi$, the original bi-implication format  $\varphi \Leftrightarrow \overline{l}$ is re-established.

Rule (Pone-c) encodes either a tautology (if $\overline{c}$ has to be inferred then $\overline{c}$ is in the knowledge base) or applications of modus ponens ($\overline{c}$ is derived assuming $\varphi$, thanks to $\varphi \Rightarrow \overline{c}$): the whole rule is an instance of disjunction introduction.

This is the point where our algorithm differs from plain backward chaining: since we are building the knowledge bases that allow inferring $\overline{l}$, whenever we encounter a literal $\overline{c}$ we need to account for all the ways of deriving $\overline{c}$, namely by placing $\overline{c}$ itself in the knowledge base or by satisfying a rule whose consequent is $\overline{c}$. Moreover, $\overline{c}$ plays now the role of $g_c$: for the procedure assumes $c$ without further deriving it, we can safely re-use the literal, whose semantics coincides now with the one of $g_c$. Observe that we could consider literals $g$ and their relation to channels explicitly, but this would impact the size of the tree, hence its readability.

Similarly, rule (Pone-x) encodes an application of modus ponens, taking advantage of the uniqueness of $\varphi \Rightarrow \overline{x}$ (cf. Lemma~\ref{lm:close-x}).

Rules (Tolle-) collect applications of modus tollens (law of contrapositive) in the classic backward fashion (i.e., when considering the derivation from the leaves to the root such steps would encode that modus). Rules (DM-) encode De Morgan's laws. Finally, rules (Comp-) simply state the compositionality of the procedure.

It is worthwhile observing that in classic backward chaining, loops are avoided by checking whether a new sub-goal (i.e., a literal to be derived) is already on the goal stack (i.e., is currently being derived). Component $\mathcal{D}$ in Table~\ref{tab:translation-trees} is in charge of keeping track of the current goals, but this is done on a local basis as opposed to the traditional global stack, that would result if $\mathcal{D}$ were treated as a global variable. As shown in \S~\ref{sec:displaying-remark}, in our setting the global stopping criterion would lead to unsound results. Moreover, observe that using the local environment $\mathcal{D}$ we lose the linear complexity in $|\rightsystem|$ typical of backward chaining, and incur an exponential complexity in the worst case. Nonetheless, observe that this theoretical bound is not incurred systematically, but it depends on the shape of the process under study. Finally, a local stack allows to re-use channel literals in rule (Pone-c).

Notice that we do not need to keep track of literals $\overline{x}$ in $\mathcal{D}$, as we cannot meet with a cycle because a variable cannot be used prior to its definition (in virtue of Lemma~\ref{lm:close-x}).

Finally, observe that a parse tree $T_l$ of $\DENOTATION{l}$ is an attack tree, showing how $l$ can be reached by combining the knowledge of given channels. The internal nodes of the tree contain a Boolean operator in $\{\land,\lor\}$, while the leaves contain literals representing the knowledge of channels. As De Morgan's laws are used to push negations to literals of $\DENOTATION{l}$, negation can only occur in the leaves of $T_l$. In the following, we shall manipulate attack trees always at their denotation level, that is, the object under evaluation is $\DENOTATION{l}$ as opposed to its representation $T_l$ (cf. \S~\ref{sec:implementation-discussion}).

For the sake of discussion, it is worthwhile noticing that the procedure for generating $\DENOTATION{l}$ can be used to generate a tree explicitly during the computation, or even an {\sc And-Or} graph~\cite[Ch.~4]{Russell2009}. It is unclear to us, however, whether the more compact graph representation would be simpler for non-expert to understand.
\begin{table}[t]
\small
\caption{Synthesising the propositional formula $\DENOTATION{l}$ for the attack tree $T_l$.}
\hrule
$$
\begin{array}{lr}
\DENOTATION{l} = \trtree{\varphi}{\emptyset}\qquad \mbox{\sf\footnotesize where}\ (\varphi \Rightarrow \overline{l}) \in \rightsystem \qquad\qquad \mbox{\small (Sel)}
\end{array}
$$
\vspace{-.3cm}
\textcolor{gray}{\hrule}
$$
\begin{array}{l}
\begin{array}{ll}
\trtree{\overline{c}}{\mathcal{D}} = \overline{c}\ \lor 
\left\{
\begin{array}{ll}
\trtree{\varphi}{(\mathcal{D} \cup \{\overline{c}\})} & \mbox{\sf\footnotesize if}\ \overline{c}\not\in \mathcal{D}, { \sf\footnotesize where}\  (\varphi \Rightarrow \overline{c}) \in \rightsystem\\
\fff  & \mbox{\sf\footnotesize otherwise}
\end{array}
\right.\qquad
\hfill \mbox{\small (Pone-c)}\\[4ex]
\trtree{\neg \overline{c}}{\mathcal{D}}	= 
\left\{
\begin{array}{ll}
\trtree{\neg\varphi}{(\mathcal{D} \cup \{\neg\overline{c}\})} & \mbox{\sf\footnotesize if}\ \neg\overline{c}\not\in \mathcal{D}, { \sf\footnotesize where}\  (\varphi \Rightarrow \overline{c}) \in \rightsystem\\
\ttt  & \mbox{\sf\footnotesize otherwise}
\end{array}
\right.\qquad
\hfill \mbox{\small (Tolle-c)}\\[4ex]
\trtree{\overline{x}}{\mathcal{D}} = \trtree{\varphi}{\mathcal{D}}\qquad\quad\ \mbox{\sf\footnotesize where}\ (\varphi \Rightarrow \overline{x}) \in \rightsystem \hfill \mbox{\small (Pone-x)}\\[2ex]
\trtree{\neg \overline{x}}{\mathcal{D}} = \trtree{\neg \varphi}{\mathcal{D}}\qquad \mbox{\sf\footnotesize where}\ (\varphi \Rightarrow \overline{x}) \in \rightsystem \hfill \mbox{\small (Tolle-x)}\\[2ex]
\end{array}\\
\begin{array}{l}
\trtree{\neg(\varphi_1 \land \dots \land \varphi_n)}{\mathcal{D}} = \trtree{\neg \varphi_1}{\mathcal{D}} \lor \dots \lor \trtree{\neg \varphi_n}{\mathcal{D}}\qquad\qquad\qquad\  \hfill \mbox{\small (DM-1)}\\
\trtree{\neg(\varphi_1 \lor \dots \lor \varphi_n)}{\mathcal{D}} = \trtree{\neg \varphi_1}{\mathcal{D}} \land \dots \land \trtree{\neg \varphi_n}{\mathcal{D}}\ \hfill \mbox{\small (DM-2)}\\[1ex]
\trtree{\varphi_1 \land \dots \land \varphi_n}{\mathcal{D}} = \trtree{\varphi_1}{\mathcal{D}} \land \dots \land \trtree{\varphi_n}{\mathcal{D}}\ \hfill \mbox{\small (Comp-1)}\\
\trtree{\varphi_1 \lor \dots \lor \varphi_n}{\mathcal{D}} = \trtree{\varphi_1}{\mathcal{D}} \lor \dots \lor \trtree{\varphi_n}{\mathcal{D}}\ \hfill \mbox{\small (Comp-2)}\\[2ex]
\qquad\qquad\trtree{\ttt}{\mathcal{D}} = \ttt\qquad\qquad\qquad \trtree{\fff}{\mathcal{D}} = \fff
\end{array}
\end{array}
$$
\hrule
\label{tab:translation-trees}
\end{table}

Finally, observe that $\DENOTATION{l}$ only contains literals $\overline{c}$ corresponding to channels, that is, the backward chaining-like procedure described above and formalised in Table~\ref{tab:translation-trees} eliminates all the literals $\overline{x}$. Therefore, the reachability of $l$ is only expressed in terms of knowledge of channels. This result is formalised in Lemma~\ref{lm:ground-l}, and guarantees that a map from channels to cost suffices to quantify an attack.

\subsection{The attack tree for NemID}\label{sec:displaying-example}
Consider the process \emph{NemID} discussed in \S~\ref{sec:example} and its translation ${\it NemID}^{\lab{13}}_{\Rightarrow}$. Figure~\ref{fig:tree-tool} shows the attack tree $T_{\lab{13}}$, as generated by our implementation, presented in \S~\ref{sec:implementation-tool}. The denotation of $T_{\lab{13}}$ is given by the following formula:
$$
\begin{array}{lll}
\DENOTATION{\lab{13}} & = & \OSYN{login}\ \lor\\
& & \left( (\OSYN{cert} \lor (\OSYN{id} \land \OSYN{pwd} \land \OSYN{otp})) \land \OSYN{cert}\right)\ \lor\\
& & \big( (\OSYN{cert} \lor (\OSYN{id} \land \OSYN{pwd} \land \OSYN{otp})) \land (\neg \OSYN{cert}) \land\ \OSYN{id} \land \OSYN{pwd} \land \OSYN{otp}\big) \ \lor\\
& & \left( \OSYN{id} \land \OSYN{pin}\right)
\end{array}
$$
As a matter of fact, the algorithm tends to generate simple but redundant formulae, which can be simplified automatically, e.g., via a reduction to a normal form. The following formula, for instance, is equivalent to $\DENOTATION{\lab{13}}$ but highlights more clearly the ways in which an attack can be carried out:
$$
\OSYN{login}\ \lor\ (\overline{\SYN{id}} \land \overline{\SYN{pin}})\ \lor\ 
(\overline{\SYN{id}} \land \overline{\SYN{pwd}} \land \overline{\SYN{otp}})\ \lor\
\overline{\SYN{cert}}
$$
Observe that the formula above is in Disjunctive Normal Form (DNF). Such normal form has the merit of providing an immediate intuition of the alternative conditions that lead to reach the program point under study, as displayed in Fig.~\ref{fig:tree-dnf}. However, the conversion to DNF may cause an exponential blow-up in the number of literals, and compact translations require to introduce fresh atoms, garbling the relation between the tree and the original system. Therefore, we did not implement such conversion in the tool.

Finally, notice that the disjunct $\OSYN{login}$ encodes the possibility of obtaining a login token in any other way not foreseen in the system, and thus accounts for all the attacks not explicitly related to the shape of our formalisation. Such possibility can be rules out by assigning the maximum possible cost to the channel, as we have seen in \S~\ref{sec:quantifying-example},
\begin{figure*}[t]
\centering
\subfigure[$T_{\lab{13}}$ as displayed by the Quality Protection Tool, presented in \S~\ref{sec:implementation-tool}.]{
\label{fig:tree-tool}
\includegraphics[scale=.38]{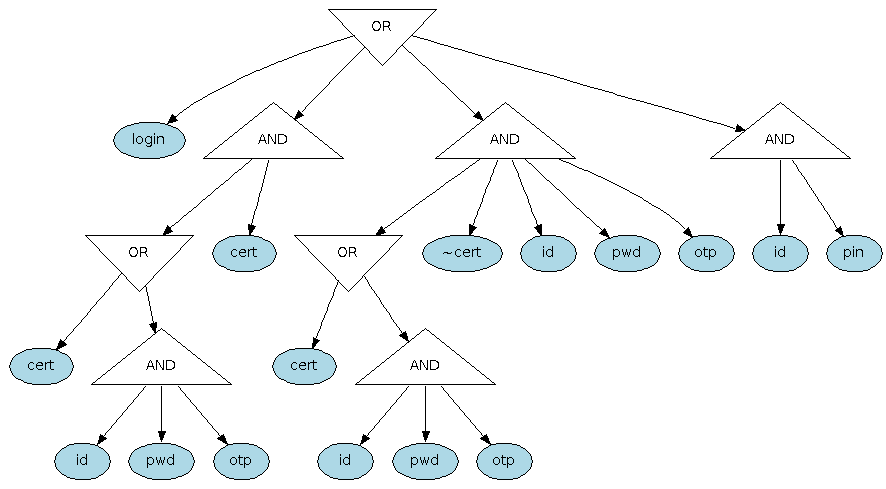}
}
\subfigure[The simplified DNF attack tree.]{
\label{fig:tree-dnf}
\begin{tikzpicture}[node distance=1.5cm,scale=0.9,every node/.style={scale=0.85}] 
  \node[circle,draw,label=above:{$T_{\lab{13}}$}] (n1) [] {$\lor$};
  \node[circle,draw] (n2) [below left=1cm and 1.2cm of n1] {$\land$};
  \node				 (n0) [below left=.9cm and -.1cm of n1] {$\OSYN{login}$};
  \node				 (n3) [below right=.9cm and -.1cm of n1] 		{$\OSYN{cert}$};
  \node[circle,draw] (n4) [below right=1cm and 1.2cm of n1] {$\land$};
  \node				 (n2a) [below left=.5cm and -.1cm of n2] 		{$\OSYN{id}$};
  \node				 (n2b) [below right=.5cm and -.1cm of n2] 		{$\OSYN{pin}$};
  \node				 (n5) [below left of=n4] 		{$\OSYN{id}$};
  \node				 (n6) [below right of=n4] 		{$\OSYN{pwd}$};
  \node				 (n7) [below=.45cm of n4] 		{$\OSYN{otp}$};
\path[-](n1) 	edge node {} (n2)
				edge node {} (n0)
				edge node {} (n3)
			 	edge node {} (n4)
		(n2) 	edge node {} (n2a)
				edge node {} (n2b)
		(n4) 	edge node {} (n5)
				edge node {} (n6)
				edge node {} (n7);
\end{tikzpicture}
}
\caption{The attack tree $T_{\lab{13}}$ of the NemID example.}
\label{fig:tree}
\end{figure*}

\subsection{Global stopping criterion}\label{sec:displaying-remark}
We conclude this section by showing why the global stopping criterion is unsound for the procedure of Table~\ref{tab:translation-trees}. Consider the following set of formulae:
$$
\overline{a} \Rightarrow \overline{b}\qquad \overline{b} \Rightarrow \overline{a}\qquad \overline{a} \land \overline{b} \Rightarrow \overline{\lab{7}}
$$
which stems from a conveniently simplified translation of the process
$$
P \triangleq\ ^{\lab{1}}\IN{a}{x_a}.^{\lab{2}}\OUT{b}{b}\, |\, ^{\lab{3}}\IN{b}{x_b}.^{\lab{4}}\OUT{a}{a}\, |\, 
^{\lab{5}}\IN{a}{x'_a}.^{\lab{6}}\IN{b}{x'_b}.^{\lab{7}}\OUT{c}{c}
$$
The generation of $\DENOTATION{\lab{7}}$ unfolds as follows:
$$
\DENOTATION{\lab{7}} = \trtree{\overline{a} \land \overline{b}}{\emptyset} = \trtree{\overline{a}}{\emptyset} \land \trtree{\overline{b}}{\emptyset} 
$$
where, in particular, it is
$$
\begin{array}{l}
\trtree{\overline{a}}{\emptyset} = \overline{a} \lor \trtree{\overline{b}}{\{\overline{a}\}} = \overline{a} \lor \overline{b} \lor \trtree{\overline{a}}{\{\overline{a},\overline{b}\}} = \overline{a} \lor \overline{b} \lor \fff = \overline{a} \lor \overline{b}\\
\trtree{\overline{b}}{\emptyset} = \overline{b} \lor \trtree{\overline{a}}{\{\overline{b}\}} = \overline{b} \lor \overline{a} \lor \trtree{\overline{b}}{\{\overline{b},\overline{a}\}} = \overline{b} \lor \overline{a} \lor \fff = \overline{b} \lor \overline{a}\\
\end{array}
$$
leading to $\DENOTATION{\lab{7}} = \overline{a} \lor \overline{b}$, which is consistent with the reachability of label $\lab{7}$ in $P$.

Assume now to carry out the generation of $\DENOTATION{\lab{7}}$ applying a global stopping criterion, that is, to keep track of derived goals in a global environment, initially empty. We would obtain:
$$
\trtree{\overline{a}}{\emptyset} = \overline{a} \lor \trtree{\overline{b}}{\{\overline{a}\}} = \overline{a} \lor \overline{b} \lor \trtree{\overline{a}}{\{\overline{a},\overline{b}\}} = \overline{a} \lor \overline{b} \lor \fff = \overline{a} \lor \overline{b}\\
$$
at this point, however, the environment contains $\overline{a},\overline{b}$, and thus the generation of $\trtree{\overline{b}}{}$ leads to $\overline{b}$, resulting in $\DENOTATION{\lab{7}} = (\overline{a} \lor \overline{b}) \land \overline{b}$, which is not satisfied by the model where only $\overline{a}$ is $\ttt$, and thus is wrong. Analogously, we would obtain a wrong result if we chose to unfold $\trtree{\overline{b}}{}$ before $\trtree{\overline{a}}{}$.

\newpage
\section{Implementation of the Analysis}\label{sec:implementation}

\subsection{Discussion}\label{sec:implementation-discussion}
As the backward-chaining procedure on $\rightsystem$ re-establishes bi-implications and only applies valid inference rules, $\DENOTATION{l}$ and $\bisystem$ are equisatisfiable. They are not equivalent, i.e., in general their models do not coincide, as $\DENOTATION{l}$ only contains channel literals, but they contain the same {\it attacks}. Hence, we can solve the quantitative version of the protection analysis in either way:
\begin{itemize}
\item generate $\bisystem$, compute the models bearing minimal attacks, and extract the corresponding attacks; or,
\item generate $\rightsystem$, derive $\DENOTATION{l}$, and compute its minimal models,
\end{itemize}
finally relating the result to the security lattice by means of the function $\level$.

It is worthwhile observing that while the procedure for generating trees is exponential in the worst case, the size of $\DENOTATION{l}$ is much smaller than the size of $\bisystem$, and therefore it is not necessarily the case that the overall running time would increase when undertaking the tree generation. Though our example set is not extensive enough for supporting any final claim, still it is interesting to comment briefly how the analyses on $\bisystem$ and on $\DENOTATION{l}$ behave in terms of running time.

Consider the NemID system. The translation to $\bisystem$ takes about one fourth of the time the computation of $\DENOTATION{l}$ takes (in the order of seconds). Solving the optimisation problem on $\bisystem$ takes about 1.25 the time it takes on $\DENOTATION{l}$. Nonetheless, the second step is much more demanding in terms of performance, so that on average the two approaches take the same amount of time. The same applies to the example discussed in~\cite{Vigo2014}. Increasing the size of the process under study it seems that the generation of attack trees, while exponential in general, tends to outperform the overall analysis on $\bisystem$.

It is worthwhile noticing that comparing the two approaches reduces to establishing whether it is faster to find models of $\bisystem$ or $\DENOTATION{l}$, for the translation time is negligible as the size of processes increases. Even limiting to the core propositional structure of the problem, there is no conclusive answer to the question, whose investigation falls outside the scope of this work. For an introduction to the problem of efficiency of satisfiability the reader is referred to~\cite[Ch.~9,13]{Lipton2009} and~\cite[\S~7.6.3]{Russell2009}; research work on the subject is for instance in~\cite{Achlioptas2009,Coja-Oghla2013}. For performance of SMT solvers refer to \url{http://smtcomp.sourceforge.net/}.

Finally, let us remark that our evaluation of a tree $T_l$ is, by definition, the evaluation of $\DENOTATION{l}$. In other words, the minimal cost of a tree is the minimal cost of a model $\mu$ of \DENOTATION{l} - we evaluate \DENOTATION{l}, as opposed to evaluating $T_l$. Another approach would be to evaluate the tree itself by means of traversing its structure, which however would lead to  different results, unless a multiset-based model is adopted, such as the one proposed by Mauw and Oostdik~\cite{Mauw2006}. A more precise treatment of the Boolean approach to analyzing attack trees is presented in~\cite{Aslanyan2015}.

\subsection{The Quality Protection Tool}\label{sec:implementation-tool}
A proof-of-concept implementation of the framework has been developed in Java and is available at
{\small
\begin{center}
\url{http://www.imm.dtu.dk/~rvig/quality-protection.html}
\end{center}
}
\noindent together with the code for the NemID example described in the text.

The \emph{Quality Protection Tool} takes as input an ASCII representation of a Value-Passing Quality Calculus process $P$ and a label $l$, and generates the flow constraints $\rightsystem$. 

Furthermore, the tool implements the backward-chaining procedure of \S~\ref{sec:displaying-inferring}, relying on our own simple infrastructure for propositional logic, as available libraries tend to avoid the explicit representation of implications, that is instead handy in our case during the backward-chaining computation with non-Horn-like clauses. Once the backward-chaining procedure is executed, and thus $\DENOTATION{l}$ has been derived, the tool can graphically represent the corresponding tree $T_l$, thanks to an encoding in DOT\footnote{\url{http://www.graphviz.org/}} and using ZGRViewer\footnote{\url{http://zvtm.sourceforge.net/zgrviewer.html}} for displaying the tree.

Finally, we have implemented the optimisation loop on top of the Z3 SMT solver\footnote{\url{http://z3.codeplex.com/}} (Java API). The tool resorts to Z3 for numerical cost sets, optimising the sum of the costs. As for symbolic and non-linearly-ordered cost sets, a finite lattice can be fed into the tool, and the least upper bound is used as monoid operator. Costs can be specified in two ways: numeric costs can be directly fed to the tool, while before specifying symbolic costs the finite lattice $(\mathcal{K},\sqsubseteq)$ has to be loaded. In order to specify a lattice, one has to declare $\top$ and $\bot$, and then the operator $\oplus$ as a list of entries $x\oplus y=z$. The names of the elements of the lattice and the partial order $\sqsubseteq$ are automatically inferred from the graph of $\oplus$. 

All these components are glued together thanks to a simple graphical interface. The implementation of the analysis based on $\bisystem$~\cite{Vigo2014} is also available and this, together with improvements on the translation from processes to clauses, is the main difference with respect to the tool of~\cite{Vigo2014}.

\section{Related Work}\label{sec:related}

\subsection{Protection analysis}\label{sec:related-analysis}
Our work is inspired by a successful strand of literature in protocol verification, where a protocol is translated into a set of first-order Horn clauses and resolution-based theorem proving is used to establish security properties~\cite{Paulson1998,Weidenbach1999}, and by the flow logic approach to static analysis~\cite{Nielson2012a}. In particular, the translation from processes to propositional formulae is inspired by ProVerif \cite{Blanchet2009a} translation of protocols into first-order Horn clauses, but can be more formally understood as a flow logic where the carrier logic is not the usual Alternation-free Least Fixed Point Logic, since it cannot express optimisation problems. Moreover, the main problem we discuss is propositional; ideas for a first-order extension are sketched in \S~\ref{sec:fo-trees}.

In order to formalise the ``need for protection'' of a location we resort to security lattices, that are widely used for describing levels of security in access control policies. An excellent introductory reference is~\cite[Chs.~6,7]{Amoroso1994}.

As for the solution technique we exploit, different approaches have been presented to solve optimisation problems via SMT. In particular, Nieuwenhuis and Oliveras~\cite{Nieuwenhuis2006} proposed to modify the DPLL$(T)$ procedure inherent to SMT solvers so as to look for optimal assignments, while Cimatti et al.~\cite{Cimatti2010} developed the search for an optimal assignments on top of an SMT solver, as we do in \S~\ref{sec:quantifying-solution}. Nonetheless, both these works focus on numeric weights, which in our settings are represented with linearly ordered cost structures. Our more general notion of weight is modelled after Meadows's cost sets, formalised as monoids in~\cite{Meadows2001}.

Finally, another perspective on the technical developments underpinning the analysis points to computing \emph{prime implicants} of a given formula~\cite{Dillig2012}, where in our case primality is sought with respect to the cost set.

\subsection{Attack trees}\label{sec:related-trees}
Graphical representations of security threats are often used to convey complex information in an intuitive way. Formalisation of such graphical objects are referred to chiefly as \emph{attack graphs}~\cite{Phillips1998,Jha2002,Sheyner2004,Mehta2006} and \emph{attack trees}~\cite{Schneier1999,Sheyner2002,Mauw2006,Rehak2009,Jurgenson2010}. In this work we prefer the phrase ``attack trees'', but our procedure can be adapted to generate attack graphs. We refer the reader to \cite{Kordy20141} for a recent survey on the vast literature about attack trees, while in the following we retrace some of historical developments on modelling, generating, and analysing attack trees that inspired our developments.

While different authors have different views on the information that should decorate such objects, instrumental to the analysis that the tree or the graph is supporting, all definitions share the ultimate objective of showing how atomic attacks (i.e., the leaves) can be combined to attain a target goal (i.e., the root). This perspective is enhanced in the seminal work of Schneier~\cite{Schneier1999}, that found a great many extensions and applications. In particular, Mauw and Oostdijk~\cite{Mauw2006} lay down formal foundations for attack trees, while Kordy et al.~\cite{Kordy} and Roy et al.~\cite{Roy2012} suggest ways to unify attacks and countermeasures in a single view. Even though Schneier's work is mostly credited for having introduced attack trees, and it had certainly a crucial role in making attack trees mainstream in computer security, the origin of this formalism can be traced back to fault trees, expert systems (e.g., Kuang~\cite{Baldwin1987}), and privilege graphs~\cite{Dacier1996}.

As for the automated generation of attack graphs, the literature is skewed towards the investigation of network-related vulnerabilities: available tools expect as input rich models, including information such as the topology of the network and the set of atomic attacks to be considered. The backward search techniques of Phillips and Swiler~\cite{Phillips1998} and Sheyner et al.~\cite{Sheyner2002} have proven useful to cope with the explosion of the state space due to such expressive models. However, the search has to be carried out on a state space that is exponential in the number of system variables, whose construction is the real bottle-neck of these approaches, and the result graph tends to be large even if compact BDD-based representations are used, as argued in~\cite{Ammann2002}. In particular, in~\cite{Sheyner2002} a model checking-based approach is developed, where attack graphs are characterised as counter-examples to safety properties; a detailed example is discussed in~\cite{Sheyner2004}. Similarly to Phillips and Swiler, we adopt an attacker-centric perspective, which cannot simulate benign system events such as the failure of a component, as in~\cite{Sheyner2002}. Directly addressing the exponential blow-up of~\cite{Sheyner2002}, Ammann et al.~\cite{Ammann2002} propose a polynomial algorithm, but the drop in complexity relies on the assumption of monotonicity of the attacker actions and on the absence of negation. On the same line, Ou et al.~\cite{Ou2006} present an algorithm which is quadratic in the number of machines in the network under study.

As for the analyses developed on top of attack trees, we present a reachability analysis which computes the cheapest sets of atomic attacks that allow attaining a location of interest in the system, as it is standard in the attack tree literature. This approach seamlessly encompasses the probabilistic analysis of~\cite{Sheyner2002,Sheyner2004} (costs to atomic attacks would represent their likelihood and the objective function would compute the overall probability) and offers a uniform framework to address other quantitative questions~\cite{Bistarelli2007,Kordy2012}. The NP-completeness of our SMT-based approach is in line with the complexity of the minimisation analysis of~\cite{Sheyner2002,Sheyner2004}.

Finally, Mehta et al.~\cite{Mehta2006} present a technique for ranking sub-graphs so as to draw attention to the most promising security flaws. Whilst we do not directly tackle this issue, for condensing an entire tree into a formula we gain in performance but we lose the original structure, a post-processing step could be undertaken to compute the value of the internal nodes (sub-formulae).

\section{Conclusion}\label{sec:conclusion}
Discovering attacks is an essential part of investigating security. Quantifying the attacks is necessary when dealing with complex systems and facing budget considerations. Both tasks risk to become a fruitless exercise if their findings cannot be communicated effectively to decision-makers.

Static analysis of process-algebraic specifications offers a unifying framework where these three challenges can be addressed uniformly and by means of modular developments. Our approach can be exploited in a great many context to temper qualitative verification methods with quantitative considerations.

To support this claim, we have developed a protection analysis over the Value-Passing Quality Calculus where attacks can be automatically inferred, quantified, and displayed. At the heart of the analysis lies the abstraction of security mechanisms with secure channels, which allows to define security checks as input actions and thus attacks as sets of channels over which communication must take place. Moreover, channels lend themselves naturally to support cost considerations, as they represent different security mechanisms.

Starting from this basic assumption, we have developed a qualitative analysis to discover all sets of channels in terms of models of a logic representation of the system under study, resorting to propositional satisfiability (SAT). Levering costs to channels, we have then enhanced the analysis with a quantitative layer, enriching the original SAT problem with the notion of cost of a model and developing an SMT-based optimisation procedure for computing optimal models. Finally, by means of a backward-chaining search on the constraints representing a system, we have shown how to infer an attack tree for a given target, leading to another characterisation of the quantitative analysis, possibly cheaper to compute.

Our SMT-based optimisation allows reasoning with symbolic and non-linearly ordered cost structures, as it is often more natural to describe the relationships between different protection mechanisms instead of assigning them absolute numbers. What is more, this technique is exploitable in all the contexts where models of a formula have to be ranked according to given criteria.

A step necessary to exploit fully the expressiveness of our SMT approach to optimisation is the refactoring of the implementation, so as to produce a stand-alone version of the solution engine, which takes as input an SMT problem (problem, cost lattice, objective function) and returns its optimal models. This would allow to compare our technique to those surveyed above and in particular with the forthcoming optimising version of Z3.

On the modelling side, we present in Appendix~\ref{sec:fo-trees} how to lift the developments to the full Quality Calculus. Nonetheless, it is unclear to us whether the resulting attack trees would benefit their intended users, for the additional information may reduce readability drastically. It seems instead promising to investigate further the notion of priced restriction discussed in \S~\ref{sec:quantifying-restriction}, and to compare its expressiveness to other approaches recently presented in the literature. Finally, as highlighted by Meadows~\cite{Meadows2001}, the monoid of the cost set needs not be commutative, as the order in which costs are paid might influence their combination. It would be interesting to investigate mechanisms for re-determining costs dynamically, as a process is evaluated.

\section*{Acknowledgement}
 Special thanks to Zaruhi Aslanyan and Alessandro Bruni for many inspiring and fruitful discussions.

\bibliographystyle{plain}
\bibliography{library}

\appendix

\section{Correctness of the Protection Analysis}\label{sec:correctness}
The correctness of the protection analysis with respect to the semantics of the calculus is formalised as follows:
$$
\mbox{if}\quad P|Q \Longrightarrow^* C[{^lP'}]\quad \mbox{then}\quad \exists \mathcal{N} \in \attack(\allmodels{l})\ \mbox{s.t.}\ \mathcal{N} \subseteq \FC{Q}
$$
i.e., for all the executions in which $Q$ drives $P$ to $l$, the analysis computes a set of channels $\mathcal{N} \in \Names$ that under-approximates the knowledge required of $Q$.

Technically, it is convenient to organise a formal proof in two steps. First, if $P|Q$ reaches $l$ then $P|H[\FC{Q}]$ reaches $l$, where process $H$ is the hardest attacker possible and is parametrised on the knowledge of $Q$. $H$ can be thought as the (infinite) process executing all possible actions on $\FC{Q}$, and the proof simply argues that whatever $Q$ can, $H$ can (Fact~\ref{fact:hardest}). A similar approach is detailed in~\cite{Nielson2002HardestAttaker}.

Finally, the second step shows that if $P|H[\mathcal{N}']$ reaches $l$, then there must be a set $\mathcal{N} \in \attack(\allmodels{l})$ such that $\mathcal{N} \subseteq \mathcal{N}'$ (Theorem~\ref{th:correctness}). Observe that this formulation corresponds to the qualitative analysis of \S~\ref{sec:discovering}. However, as $\attack(\minimal(\allmodels{l})) \subseteq \attack(\allmodels{l})$, the correctness of the quantitative analysis follows as a particular case.

\begin{definition}[Hardest attacker]
Let $\mathcal{N} = \{c_1, \dots, c_n\}$ be a finite set of channels. $H[\mathcal{N}]$ is the process that does all possible sequence of output actions over channels in $\mathcal{N}$:
$$
H[\mathcal{N}] \triangleq \NEW{d}{\left(!(\OUT{c_1}{d})\right)|\dots|\left(!(\OUT{c_n}{d})\right)}
$$
(where labels are of no use hence omitted).
\end{definition}

In the definition we used a fresh name $d$ as output term, but any name can be chosen as $P$ cannot check the content of input variables. Observe that the channels in $\mathcal{N}$ might be used to trigger necessary outputs on other channels, according to the constraints in $\bisystem$.

\begin{fact}\label{fact:hardest}
Let $P,P',Q$ be processes and $C,C'$ contexts. It holds that
$$
\mbox{if}\quad P|Q\, \Longrightarrow^* C[{^lP'}]\quad \mbox{then}\quad  P|H[\FC{Q}]\, \Longrightarrow^* C'[{^lP'}]
$$
\end{fact}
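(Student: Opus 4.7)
The plan is to proceed by induction on the length $n$ of the derivation $P\mid Q \Longrightarrow^{n} C[^{l}P']$. The base case $n=0$ is immediate: since labels are unique to the analysed process $P$ and $Q$ contributes no labels, the occurrence of $^{l}P'$ up to structural congruence must already be in $P$, so $P\mid H[\FC{Q}]$ witnesses the same context with zero reduction steps. For the inductive step, consider the first transition $P\mid Q \Longrightarrow R$ and dispatch on which rule of Table~\ref{tab:semantics} is in force after pushing all restrictions to the outer-most level using rule (Sys) and the congruence of Table~\ref{tab:congruence}.

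The straightforward cases are a silent or broadcast transition whose input and output both sit inside $P$, and a silent transition sitting inside $Q$. In the former, $P\mid H[\FC{Q}]$ can fire the very same transition, producing $R' = P''\mid H[\FC{Q}]$ with $R = P''\mid Q$, after which the induction hypothesis applies to $P''\mid Q \Longrightarrow^{n-1} C[^{l}P']$. In the latter, $P$ is syntactically unchanged in $R$, the free channels of the residual $Q'$ satisfy $\FC{Q'}\subseteq \FC{Q}$ (since $\tau$ steps do not introduce new free names), and we take zero simulation steps, applying the induction hypothesis to $P\mid Q' \Longrightarrow^{n-1} C[^{l}P']$ and the smaller attacker knowledge $\FC{Q'}$, which is absorbed by $H[\FC{Q}]$ in the obvious way.

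The central case is a broadcast $\OUT{c}{c'}$ emitted by $Q$ and consumed by a binder in $P$ via (In-ff) or (In-tt). Because the calculus is value-passing, the channel $c$ must literally be a name in $Q$, so the fact that $Q$ can fire this broadcast (possibly after scope extrusion) forces $c\in \FC{Q}$. The hardest attacker contains the replicated output $!\OUT{c}{d}$ and can therefore produce a broadcast on $c$ carrying the fresh dummy $d$. Since $\SYN{case}$ only distinguishes $\SOME$ from $\NONE$ and inputs cannot inspect the transmitted value, the substitution $[\SOME(d)/x]$ generated by $H$ affects the evolution of $P$ in exactly the same way, as far as reachability of labels is concerned, as the substitution $[\SOME(c')/x]$ generated by $Q$. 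Concretely one sets up a simulation in which each $P$-residual paired with $Q$ is matched by the same $P$-residual paired with $H[\FC{Q}]$, modulo renaming of input-variable witnesses; the induction hypothesis then closes the step.

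The main obstacle is exactly the verification that $c\in\FC{Q}$ whenever $Q$ broadcasts to $P$. The delicate sub-case is when $Q$ acquires a channel $e$ through a prior input from $P$ and later uses $e$ in output position: one has to either rule this out syntactically (the value-passing restriction prevents a received value from appearing as a channel subject, since channels are names and not terms), or show that the effect on reachability can be replayed by $P$ itself or is unnecessary, so that $H[\FC{Q}]$ never needs to emit on such an $e$. Formalising this invariant, possibly by strengthening the induction to carry an explicit bookkeeping of the set of names available in subject position in every derivative of $Q$, is where the bulk of the technical work lies; once it is established, the simulation argument above pushes through uniformly.
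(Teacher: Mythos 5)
Your proposal is correct and follows essentially the same line as the paper, which justifies this Fact informally by observing that, the calculus being value-passing, the evolution of $P$ is driven exclusively by the number of output actions the attacker performs, the channels over which they are executed, and their order, all of which $H[\FC{Q}]$ reproduces by interleaving arbitrarily many outputs on each $c \in \FC{Q}$. The ``main obstacle'' you flag is discharged immediately by the grammar of Table~\ref{tab:syntax} --- channel positions admit only names $c$, never variables $y$ --- so a received value can never occur in subject position, which is precisely why the paper leaves the statement as a Fact with only this brief justification rather than a full inductive simulation proof.
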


As a matter of fact, the only blocking actions in $P$ are inputs, and since the calculus is value-passing, the execution of $P$ is driven exclusively by $(i)$ the number of output actions $Q$ performs, $(ii)$ the channels over which they are executed, and $(iii)$ their order. Now, for each channel $c \in \FC{Q}$, that is, for each channel known to $Q$, by construction $H[\FC{Q}]$ interleaves an arbitrary number of output on $c$, thus mimicking all the possible sequence of output actions on $\FC{Q}$, among which is the one performed by $Q$.

The main correctness result is phrased as follows. Since the semantics is value-passing, the proof does not present any particular obstacle, and therefore we limit to present its structure and major cases.

\begin{theorem}[Correctness of the protection analysis]\label{th:correctness}
Let $P,P'$ be processes, $C$ a context, and $\mathcal{N} \in \Names$ a set of channels. It holds that
$$
\mbox{if}\quad P|H[\mathcal{N}] \Longrightarrow^* C[{^lP'}]\quad \mbox{then}\quad  \left(\exists \mathcal{N}'\, .\, \mathcal{N}' \in \attack(\allmodels{l})\, \land\, \mathcal{N}' \subseteq \mathcal{N}\right)
$$
\end{theorem}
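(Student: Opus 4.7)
The plan is to proceed by induction on the length $n$ of the derivation $P|H[\mathcal{N}] \Longrightarrow^n C[{^lP'}]$, constructing a single witness model of $\bisystem$ out of the reduction trace. Let $\mathcal{N}^* \subseteq \mathcal{N}$ denote the set of channels $c$ such that $H[\mathcal{N}]$ performs at least one output on $c$ that is consumed along the reduction via rule (In-ff) or (In-tt). Setting $\mu(g_c) = \ttt$ for $c \in \mathcal{N}^*$ and $\mu(g_c) = \fff$ otherwise determines the remaining literals by propagating through the bi-implications of $\bisystem$. Since each literal appears as the consequent of exactly one conjunct (after the merging step of \S~\ref{sec:discovering-translation}) and these bi-implications are monotone in the $g$ variables, the resulting least model $\mu$ is unique and well-defined as a fixed point. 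The theorem will follow by showing $\mu(\overline{l}) = \ttt$, since then $\mu \in \allmodels{l}$ and $\mathcal{N}' := \attack(\mu) \subseteq \mathcal{N}^* \subseteq \mathcal{N}$.

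The core of the argument is an invariant maintained through the induction: after the first $i$ reduction steps, for every label $l'$ reached in the resulting configuration $P_i$, the literal $\overline{l'}$ is forced to $\ttt$ in $\mu$; moreover, whenever a binder $b$ in $P_i$ has accumulated partial substitutions $[\SOME(c_j)/x_j]$ via previous (In-*) applications, the corresponding variable literals $\overline{x_j}$ are also forced to $\ttt$, in agreement with the computation of $\hip$ and $\ths$ on the original un-substituted binder. I would formalise this invariant using the hypothesis parameter $\hp$ that $\trprot{\cdot}{\cdot}$ carries along its recursive traversal: at the static program point where the current dynamic sub-process originates, $\hp$ has been forced to $\ttt$ by the propagation, so every constraint spawned beneath that point is ready to fire in lockstep with the semantics.

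The inductive step is a case analysis on the semantic rules of Table~\ref{tab:semantics}. Rules (Sys), (Par-tau), (Par-brd), the structural congruence, and the output rule (Brd) simply move the computation to the next label and leave $\mathcal{N}^*_i$ unchanged; the constraint $\hp \llt \overline{l'}$ produced by $\trprot{\cdot}{\cdot}$ on the corresponding prefix closes the argument, and the constraint $\hp \llt \overline{c_1}$ produced for an output $^{l'}\OUT{c_1}{t}$ discharges $\overline{c_1}$ without needing $g_{c_1}$. The interesting cases are (In-ff) and (In-tt) fired by a residual of $H[\mathcal{N}]$: the channel $c_1$ is added to $\mathcal{N}^*_i$, the propagation sets $\overline{c_1} = \ttt$, and the constraints $(\hp \land \overline{c_1}) \llt \overline{x}$ and, for (In-tt), $\hp \land \hip(b) \llt \overline{l'}$ become enabled exactly when the binder is satisfied in the semantics, yielding the invariant at step $i{+}1$. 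The (Then) and (Else) rules match the $\SYN{case}$ translation directly, since the $\overline{x}$ literals capture whether the corresponding input has been received.

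Three technical points are likely to require careful treatment, and I expect the second to be the main obstacle. First, quality binders need an auxiliary lemma stating that $\CONF{b'}{\ttt}{\theta}$ in the semantics implies that $\hip$ applied to the original un-substituted binder evaluates to $\ttt$ under $\mu$; this follows by induction on the binder structure using that $\SEM{q}$ is the very same Boolean predicate employed in both the semantics and $\hip$. Second, the asynchronous broadcast semantics allows a single output from $H$ to simultaneously unblock inputs in several parallel components and to feed several sub-binders of a quality binder in one step; the flow constraints encode channel reachability by a single shared literal $\overline{c_1}$, so this is seamlessly handled, but the proof must explicitly verify that the invariant is preserved for every component and every sub-binder affected by that broadcast, which is where the inductive bookkeeping is most delicate. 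Third, replication and restriction: the congruence $!P \equiv P \,|\, !P$ unfolds an arbitrary number of copies whose restricted names should be $\alpha$-renamed, yet $\trprot{!P}{\hp} = \trprot{P}{\hp}$ collapses them; the argument here is that the collapse is an over-approximation in the reachability direction, so semantic reachability still entails the existence of a satisfying model, in line with the discussion in \S~\ref{sec:discovering-solution}.
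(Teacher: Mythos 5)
Your overall architecture --- induction on the length of the derivation, a case analysis on the rules of Table~\ref{tab:semantics}, and the three technical points you flag (quality binders, one broadcast unblocking several components, replication/restriction) --- is in the same family as the paper's proof, but you organise it differently: the paper applies the inductive hypothesis to the intermediate label $l'$ reached after $k_0$ steps, obtaining a set $\mathcal{N}''\in\attack(\allmodels{l'})$ for the \emph{re-rooted} constraint system $P^{l'}_{\Leftrightarrow}$, and then extends it by at most the one channel consumed in the last step; you instead fix a single witness model of $\bisystem$ up front and push a forward invariant through the whole trace.

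The gap is in the construction of that witness model. You claim that fixing $\mu(g_c)$ according to $\mathcal{N}^*$ ``determines the remaining literals by propagating through the bi-implications'', that the system is ``monotone in the $g$ variables'', and that a unique least model results. Neither claim holds. First, the antecedents generated for the $\SYN{else}$ branches of $\SYN{case}$ clauses contain negative literals $\neg\overline{x}$, and since $\overline{x}$ is positively determined by channel literals, the label and channel literals generated below an $\SYN{else}$ branch depend \emph{antimonotonically} on the $g$'s (under an existential quality binder, raising $g_{c_1}$ from $\fff$ to $\ttt$ makes $\overline{x_1}$ true and thereby falsifies a hypothesis of the form $\hp\land(\overline{c_1}\lor\overline{c_2})\land\neg\overline{x_1}$), so there is no Knaster--Tarski least fixed point to appeal to. Second, even in the negation-free fragment the channel constraints can be cyclically interdependent: the paper's own example in \S~\ref{sec:displaying-remark} yields $g_a\lor\overline{b}\Leftrightarrow\overline{a}$ and $g_b\lor\overline{a}\Leftrightarrow\overline{b}$, which with $g_a=g_b=\fff$ admit two distinct completions, so ``propagation'' does not determine the non-$g$ literals at all. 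Since the existence of \emph{some} model of $\bisystem$ whose guessed channels lie in $\mathcal{N}^*$ is exactly what your argument must establish before the invariant can even be stated, this is not cosmetic: you would need either to define $\mu$ by a different, e.g.\ trace-driven, construction and then verify explicitly that it satisfies every bi-implication --- including the backward direction of $\varphi\lor g_c\Leftrightarrow\overline{c}$ for channels the trace never touches --- or to follow the paper's route and grow the attack set step by step, letting the witnessing model vary with the intermediate label.
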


{\it Proof sketch.} 
The proof is organised by induction on the length $k$ of the derivation sequence $P|H[\mathcal{N}] \Longrightarrow^* C[{^lP'}]$.\\

{\it Basis.} If $k=0$, then it is $P = C[{^lP'}]$, from which $\emptyset \in \attack(\allmodels{l})$, for $\overline{l}$ is a fact in $\bisystem$, and thus it does not entail any channel literal to be $\ttt$. Since $\emptyset \subseteq \mathcal{N}$, for all set $\mathcal{N}$, the thesis follows.\\

{\it Step.} Assume $k=k_0+1$. The derivation sequence can be written as 
$$
P|H[\mathcal{N}] \Longrightarrow^{k_0} C''[^{l'}P''] \Longrightarrow C'[{^lP'}]
$$
for some context $C''$ and process $P''$. The inductive hypothesis applies to the first $k_0$ steps of the derivation: there exists $\mathcal{N}'' \in \attack(\allmodels{l'})$ such that $\mathcal{N}'' \subseteq \mathcal{N}$. Now, it suffices to show that the last step in the derivation sequence, leading to reaching $l$, preserves the inclusion relationship. 

The last reduction $C''[P''] \Longrightarrow C'[{^lP'}]$ is a short-hand writing that conflates a number of cases, but observe that it must be entailed by combining rule (Sys) with a transition $P'' \strans{\lambda} P'$, where we assume that the contexts $C'', C'$ take care of hiding restrictions preceding $P''$ and parallel components of $P'', P'$ that are not affected by the transition. As for the congruence step in the premise of rule (Sys), observe that the rewrite cannot produce inputs or outputs not already considered by the analysis, as the latter always assumes replications to be unfolded. To conclude, a formal proof requires an induction on the shape of the inference tree for the transition $P'' \strans{\lambda} P'$.

Let us comment upon the case of rule (In-tt), which is the most interesting and the only non-trivial. Assume that the binder $b$ is a simple input $\IN{c}{x}$, passing which the label of interest is attained. Now, it is either $c \in \mathcal{N}''$, in which case we conclude $\mathcal{N}' = \mathcal{N}'' \subseteq \mathcal{N}$, or $c \notin \mathcal{N}''$. Again, we have two cases.

If there exists a subset of $\mathcal{N}''$ which can trigger another component of $P$ to make an output on $c$, we again conclude $\mathcal{N}' = \mathcal{N}'' \subseteq \mathcal{N}$. Otherwise, we are in the case $\varphi \land \overline{c} \Leftrightarrow \overline{l}$ with $c \notin \mathcal{N}''$ and $\overline{c}$ not a consequence of the literals corresponding to $\mathcal{N}''$. In the set of constraints we have $g_c \lor \varphi' \Leftrightarrow \overline{c}$. It must then be either $c \in \mathcal{N}$, or $\mathcal{N}''' \subseteq \mathcal{N}$, where $\mathcal{N}'''$ satisfies $\varphi'$, otherwise $\mathcal{H}[N]$ would not pass the input. If we look at models of $\bisystem$, we have that either $g_c$ is $\ttt$ or $\varphi'$ evaluates to $\ttt$ -- in every model. In the first case we conclude $\mathcal{N}' = \mathcal{N}'' \cup \{c\} \subseteq \mathcal{N}$. In the latter $\mathcal{N}' = \mathcal{N}'' \cup \mathcal{N}^{iv} \subseteq \mathcal{N}$, with $\mathcal{N}^{iv} \subseteq \mathcal{N}'''$, because the least way of satisfying $\varphi'$ by the analysis under-approximates the least way of satisfying $\varphi'$ by the semantics.

The same reasoning applies to the case in which $b$ is a quality binder, as formulae are computed according to the semantics of quality binders.

\section{Properties of Attack Trees}\label{sec:technical-trees}
This appendix contains some results that substantiate the procedure for generating attack trees discussed in \S~\ref{sec:displaying-inferring}.

\begin{lemma}\label{lm:close-x}
Let $P$ be a process. For any variable $x$ in $P$, there exists exactly one formula $\varphi \Rightarrow \overline{x}$ in the translation $\rightsystem$, and $\overline{x}$ does not occur in $\varphi$.
\end{lemma}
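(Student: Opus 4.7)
The plan is to proceed by structural induction on the process $P$, tracking how the function $\trprot{\cdot}{\cdot}$ generates constraints and, in particular, how the accumulated hypothesis $\hp$ evolves in its second argument. The unique-binding assumption stated in \S~\ref{sec:calculus-syntax} is the pivotal ingredient: each input variable $x$ occurs in exactly one binder position throughout $P$, and hence the function $\ths$ is invoked on the sub-binder that introduces $x$ exactly once during the whole translation.

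\emph{Existence and uniqueness.} Inspecting Table~\ref{tab:translation-clauses}, the literal $\overline{x}$ appears as the consequent of a generated constraint in only one clause of the translation, namely in $\ths(\hp,\IN{c}{x}) = \{(\hp \land \overline{c}) \llt \overline{x}\}$; the quality-binder case dispatches $\ths$ recursively on the sub-binders and therefore cannot introduce new occurrences of $\overline{x}$ as a consequent beyond those produced by the simple-input sub-binders. By the uniqueness of binders, the recursion of $\trprot{\cdot}{\cdot}$ reaches the particular simple input $\IN{c}{x}$ exactly once along one derivation path, producing a unique constraint with consequent $\overline{x}$. Observe that this already holds prior to the normalisation step that merges rules sharing a consequent via disjunction; after normalisation the uniqueness is preserved trivially.

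\emph{Absence of $\overline{x}$ in $\varphi$.} I would strengthen the inductive statement to an invariant on the second argument of $\trprot{\cdot}{\cdot}$: in any recursive call $\trprot{P'}{\hp'}$ arising while translating $P$, every variable literal $\overline{y}$ (or $\neg\overline{y}$) occurring in $\hp'$ corresponds to a variable $y$ that has already been bound by an input binder strictly above the current syntactic position. This is immediate by inspection of the translation rules: $\hp$ is only extended by conjuncts of the form $\hip(b)$, which by the definition of $\hip$ contain channel literals only, or by $\overline{x}$ and $\neg\overline{x}$ coming from a $\SYN{case}$ clause, which can only occur on variables already in scope. Hence when the translation reaches the unique binder that introduces $x$ and invokes $\ths(\hp,\cdot)$, the hypothesis $\hp$ mentions only channels and variables bound strictly above the binder, so $\overline{x}$ does not appear in $\hp$. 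Since the generated antecedent is $\hp \land \overline{c}$ and $\overline{c}$ is a channel literal distinct from $\overline{x}$, the conclusion follows.

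\emph{Main obstacle.} The only delicate point is formulating and maintaining the scope invariant across the $\SYN{case}$ rule, where the hypothesis is extended with a variable literal. Because the unique-binding assumption guarantees that no $\SYN{case}$ on $x$ can be encountered before the binder that introduces $x$, the invariant is preserved; but spelling this out rigorously requires a companion observation that closed, uniquely-bound processes satisfy a standard variable scoping discipline. Once that is in hand, the induction goes through case-by-case with no further difficulty.
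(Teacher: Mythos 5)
Your proposal is correct and follows essentially the same route as the paper: a structural induction on $P$, using the unique-binding assumption for uniqueness of the constraint with consequent $\overline{x}$, and the observation that $\overline{x}$ enters the hypothesis $\hp$ only at a $\SYN{case}$ clause, which by closedness can only occur after the binder introducing $x$. Your explicit scope invariant on the second argument of $\trprot{\cdot}{\cdot}$ and the remark about normalisation are just more detailed spellings-out of what the paper's proof sketch states informally.
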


\begin{proof}
By induction on the structure of processes. In particular, observe that in Table~\ref{tab:translation-clauses} a literal $\overline{x}$ is added to $\varphi$ only when a $\SYN{case}$ clause is met, and by hypothesis $\overline{x}$ must previously appear in a binder, for processes are closed. Finally, recall that we assume processes to be renamed apart (cf- \S~\ref{sec:calculus-syntax}), hence the same variable or name cannot be bound twice.
\end{proof}

Let us discuss now the \emph{complexity} of the translation given in Table~\ref{tab:translation-clauses}. Let $\size(\mathcal{C})$ denote the number of literals occurring in a set of formulae $\mathcal{C}$, that is, $\size(\mathcal{C}) = \sum_{\varphi \in \mathcal{C}}\left( \size(\varphi)\right)$, where $\size(\varphi)$ counts the literals in $\varphi$.

\begin{lemma}\label{lm:complexity-1}
Let $P$ be a process, and assuming that $P$ contains $n$ actions. Then $\size(\trprot{P}{\ttt}) = O(n^2)$.
\end{lemma}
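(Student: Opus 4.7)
The plan is to prove the bound by structural induction on $P$, showing simultaneously that (i) the \emph{number} of constraints produced by $\trprot{P}{\hp}$ is linear in the size of $P$, and (ii) the \emph{antecedent size} of each generated constraint is of the form $|\hp| + O(|P|)$. Combining these two facts on the top-level call $\trprot{P}{\ttt}$ yields $O(n)$ constraints each of size $O(n)$, hence $\size(\trprot{P}{\ttt}) = O(n^2)$. I interpret ``$P$ contains $n$ actions'' as $n$ being the number of labelled program points plus the number of sub-binders occurring inside quality binders, so that $|P| = \Theta(n)$ for a suitable syntactic notion of size; a rescaling of constants otherwise leaves the argument unchanged.

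First I would establish the following invariant by induction on the structure of $P$: there are constants $\alpha,\beta > 0$ such that $\trprot{P}{\hp}$ produces at most $\alpha\cdot|P|$ constraints, and every constraint $\varphi' \llt \overline{p}$ in $\trprot{P}{\hp}$ satisfies $\size(\varphi') \le |\hp| + \beta\cdot|P|$. The cases $\NIL$, $!P'$, $\NEW{c}{P'}$ and parallel composition are immediate. For the binder clause $^lb.P'$, the recursive call is made with hypothesis $\hp \land \hip(b)$; the key observation is that $\size(\hip(b))$ is bounded by the number of leaves (simple inputs) in $b$, which is part of $|P|$. The $\ths(\hp,b)$ part contributes one constraint per simple input inside $b$, each with antecedent $\hp \land \overline{c}$ of size $|\hp|+1$. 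For outputs, we add one constraint of antecedent size $|\hp|$ plus the single literal $\overline{l}$. For $\SYN{case}$, the two recursive calls use hypotheses $\hp \land \overline{x}$ and $\hp \land \neg\overline{x}$, growing the hypothesis by one literal. In every case, the bookkeeping shows that the linear-in-$|P'|$ bound given by the induction hypothesis on $P'$ extends to a linear-in-$|P|$ bound, after adjusting $\alpha,\beta$ appropriately.

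Once the invariant is in place, instantiating at the top-level call $\trprot{P}{\ttt}$ (where $|\hp|=0$) yields at most $\alpha\cdot n$ constraints, each of antecedent size at most $\beta\cdot n$. Summing over all generated constraints and adding the $O(n)$ consequent literals gives $\size(\trprot{P}{\ttt}) \le \alpha\cdot n\cdot(\beta\cdot n + 1) = O(n^2)$.

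The main obstacle I anticipate is the bookkeeping around quality binders: because $\SEM{q}(\hip(b_1),\ldots,\hip(b_k))$ can range over arbitrary Boolean predicates, in principle $\size(\hip(b))$ might blow up past linear. For the guards $\forall$ and $\exists$ (and more generally the linear guards considered in \S~\ref{sec:displaying-inferring}) this quantity is bounded by the total number of sub-binders of $b$, which is charged against $|P|$. Provided we restrict to such guards — or, equivalently, include in the count $n$ the total size of the quality predicates — the inductive argument goes through cleanly. A secondary subtlety is the post-translation normalisation that merges rules $\varphi\llt\overline p$ and $\varphi'\llt\overline p$ into $(\varphi\lor\varphi')\llt\overline p$; this can only decrease the number of constraints and adds at most one connective per merge, so the $O(n^2)$ bound is preserved.
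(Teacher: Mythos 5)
Your proposal is correct and follows essentially the same route as the paper's proof: bound the number of generated constraints by $O(n)$ and the size of each antecedent by $O(n)$ (since the hypothesis grows by a bounded amount per preceding action), then multiply. Your version merely makes the induction invariant and the quality-guard caveat explicit, which the paper leaves implicit.
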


\begin{proof}
If $P$ consists of $n$ actions, $\trprot{P}{\ttt}$ consists of at most $O(n)$ formulae. More in detail, $\trprot{P}{\ttt}$ consists of $n_o + n_c + n_i + n_b$ formulae, $n_o$ being the number of outputs in $P$, $n_c$ the number of $\SYN{case}$ clauses, $n_i$ the number of simple inputs (including the ones occurring within quality binders), and $n_b$ the number of binders. The number of literals in a formula depends linearly on the number of actions preceding the label at which the formula is generated (cf. Table~\ref{tab:translation-clauses}), hence the number of literals in $\trprot{P}{\ttt}$ is asymptotically bounded by $n^2$.
\end{proof}

It is interesting to observe that from a theoretical point of view $O(n^2)$ is a precise bound to $\size(\trprot{P}{\ttt})$. Consider the process ${\it IN}_n$ that consists of $n$ sequential inputs $\IN{c_1}{x_1}.\dots .\IN{c_n}{x_n}$. The number of literals in $\trprot{{\it IN}_n}{\ttt}$ grows with 
$$
\begin{array}{lll}
\sum_{i=1}^n \left(2(i-1)+3\right) & = & \sum_{i=1}^n \left( 2i+1\right) =\\
								  & = & n + 2\sum_{i=1}^n i = n + 2\frac{n(n+1)}{2} =\\
								  & = & n^2 + 2n
\end{array}
$$
where $i$ records the number of literals in the hypothesis $\varphi$, we have omitted counting the $\ttt$ conjuncts, and we leverage the fact that an input generates two formulae whose size is $\size(\varphi)+2$ adding $1$ literal to the hypothesis, from which the relation $2(i-1) + 3$ is derived. Similarly, the translation of a process made of alternating inputs and $\SYN{case}$ clauses would grow quadratically (with greater constants than ${\it IN}_n$).

\begin{lemma}\label{lm:ground-l}
Let $P$ be a process. For all labels $l$ occurring in $P$, the formula $\DENOTATION{l}$ built according to the rules of Table~\ref{tab:translation-trees} contains no literal $\overline{x}$. In particular, $\DENOTATION{l}$ only contains literals related to channels $c$. 
\end{lemma}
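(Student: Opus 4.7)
I would prove Lemma~\ref{lm:ground-l} by a combined well-founded and structural induction, using Lemma~\ref{lm:close-x} as the key tool. The plan is to strengthen the claim to: for every formula $\psi$ and every environment $\mathcal{D}$ that can arise during the computation of $\DENOTATION{l}$, the result $\trtree{\psi}{\mathcal{D}}$ contains no literal of the form $\overline{x}$ or $\neg\overline{x}$. The original lemma then follows by applying (Sel) with $\psi = \varphi$, the unique antecedent of $\overline{l}$.

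The first step is to set up a well-founded strict partial order $\prec$ on the variables occurring in $P$. Define $x' \prec x$ whenever $\overline{x'}$ occurs in the unique antecedent $\varphi$ of the defining rule $\varphi \Rightarrow \overline{x}$ given by Lemma~\ref{lm:close-x}. Inspecting Table~\ref{tab:translation-clauses}, variables can appear in $\hp$ only through $\SYN{case}$ clauses that have already been traversed when the binder for $x$ is reached; since the paper assumes each variable is bound exactly once and processes are closed, such variables must be bound at strictly outer syntactic positions. Thus $\prec$ is a sub-relation of the scope order and hence well-founded.

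The main induction is then on pairs $(\psi,\mathcal{D})$, ordered lexicographically by: (i) the multiset of $\prec$-ranks of variables occurring in $\psi$, and (ii) the structural size of $\psi$, with channel recursion discharged by the fact that $\mathcal{D}$ strictly grows and $\mathcal{D} \subseteq \{\overline{c},\neg\overline{c} \mid c \in \Names\}$ is bounded. The cases are routine: the constants $\ttt,\fff$ return themselves; rules (Comp-1), (Comp-2), (DM-1), (DM-2) preserve the property by the inductive hypothesis applied to each conjunct/disjunct; rule (Pone-c) either yields $\overline{c} \lor \fff$ (no variable literals) or $\overline{c} \lor \trtree{\varphi}{\mathcal{D} \cup \{\overline{c}\}}$, in which case the recursive call decreases in measure (ii) without increasing (i); rule (Tolle-c) is analogous. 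Finally, rules (Pone-x) and (Tolle-x) replace $\overline{x}$ (respectively $\neg\overline{x}$) by $\trtree{\varphi}{\mathcal{D}}$ (respectively $\trtree{\neg\varphi}{\mathcal{D}}$), where $\varphi$ is the unique antecedent of the defining rule for $x$; by Lemma~\ref{lm:close-x} the literal $\overline{x}$ does not occur in $\varphi$, and by definition of $\prec$ every variable in $\varphi$ is strictly below $x$, so the recursive call decreases strictly in measure (i). In all cases the output contains no variable literals, completing the induction.

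The main obstacle is ensuring that the procedure actually terminates and that a suitable measure exists that decreases in the variable-elimination steps while being preserved (up to structural descent) by the channel-elimination and De~Morgan steps. The delicate point is that unfolding a single variable via (Pone-x) can reintroduce many atoms in the processed formula, including other variables and channel literals, so one cannot simply induct on the size of $\psi$. Lemma~\ref{lm:close-x} is what unlocks the argument: it guarantees both the existence and uniqueness of the defining rule and, crucially, that $\overline{x}$ is absent from its own antecedent, which together with the scoping observation yields the well-founded order $\prec$ that makes the induction go through.
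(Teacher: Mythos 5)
Your overall strategy is the same as the paper's: the paper proves this lemma by induction on the number of steps in the unfolding of $\trtree{\cdot}{\cdot}$, the essential observation being that no rule of Table~\ref{tab:translation-trees} ever emits a literal $\overline{x}$ into the output --- only channel literals and constants are emitted, while (Pone-x) and (Tolle-x) always replace a variable literal by the translation of its unique defining antecedent, which exists and is $\overline{x}$-free by Lemma~\ref{lm:close-x}. Your strengthening to arbitrary calls $\trtree{\psi}{\mathcal{D}}$ and your case analysis of the emitted literals are both correct, and you are right that a careful proof also owes a termination argument (which the paper leaves implicit).

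The gap is in the termination measure itself. You order calls lexicographically with the multiset of $\prec$-ranks of variables in $\psi$ as the \emph{most significant} component, and claim that (Pone-c) ``decreases in measure (ii) without increasing (i)''. That is false: a rule $\varphi \Rightarrow \overline{c}$ in $\rightsystem$ originates from an output $^{l}\OUT{c}{t}$, and its antecedent $\varphi$ is the hypothesis accumulated on the path to that output, which contains a literal $\overline{x}$ or $\neg\overline{x}$ for every $\SYN{case}$ clause on that path. Concretely, in the NemID system the rule for $\OSYN{login}$ has an antecedent containing $\overline{x}_{\SYN{cert}}$, $\overline{x}_{\SYN{id}}$, $\overline{x}_{\SYN{pwd}}$, $\overline{x}_{\SYN{otp}}$, $\overline{x}_{\SYN{pin}}$; so the call $\trtree{\OSYN{login}}{\mathcal{D}}$, whose argument has an empty variable multiset, recurses into a formula with a non-empty one --- your first component strictly increases and the lexicographic descent fails. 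The repair is to promote the $\mathcal{D}$-component to the most significant position: order calls by the triple $\bigl(2|\Names| - |\mathcal{D}|,\ \text{multiset of variable ranks of }\psi,\ \size(\psi)\bigr)$, lexicographically. Along any branch of the recursion $\mathcal{D}$ never shrinks; (Pone-c) and (Tolle-c) strictly decrease the first component; (Pone-x) and (Tolle-x) leave it unchanged and strictly decrease the second (by Lemma~\ref{lm:close-x} together with your scope-order observation, which is indeed the justification the paper itself gives for why $\mathcal{D}$ need not track $\overline{x}$ literals); the De Morgan and compositional rules leave the first two components non-increasing and strictly decrease the third. With that reordering your induction goes through and the remainder of your argument is sound.
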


\begin{proof}
By induction on the number of steps in the unfolding of the generation of $\DENOTATION{l}$, according to the rules in Table~\ref{tab:translation-trees}.
\end{proof}

\section{First-Order Attack Trees}\label{sec:fo-trees}
We present in this section an extension to the framework whose detailed development deserves to be deepened in future work. The ideas discussed in the following have not been implemented in the tool of \S~\ref{sec:implementation-tool}.

The notion of knowledge needed to perform an attack adopted so far shifts the semantics load on the concept of secure channel. Besides its simplicity, this abstraction proves useful to model a great many different domains and lead to a sensible notion of attack tree. Nevertheless, it seems interesting to explore less abstract scenarios, where messages exchanged over channels do enjoy a structure and their content is exploitable in the continuation. There is a substantial corpus of literature on how to extend a process calculus to handle reasoning on terms (e.g., via equational theories or pattern matching, cf.~\cite{Vigo2013}), but at the semantic heart of such calculi lies the capability of testing if what is received matches what was expected.

In order to fully encompass the original Quality Calculus we should introduce both testing capabilities and structured messages. We limit here to show how to deal with the first extension, as it has a wider impact on the technical developments. As a matter of fact, distinguishing between a term $t$ and an expression $\SOME(t)$ we are already dealing with a (very simple) signature, and this gives the necessary insight onto our idea.

The syntax of the Value-Passing Quality Calculus, introduced in \S~\ref{sec:calculus-syntax}, is enhanced as follows. First of all, we allow now input and output channels to range over terms $t$, writing $\IN{t}{x}$ and $\OUT{t_1}{t_2}$. In particular, $t$ can be a variable $y$, realising name-passing. Second, we update the $\SYN{case}$ clause as $\CASEL{l}{x}{t}{P_1}{P_2}$, allowing to check the data payload (if any) of an input variable $x$. The semantics of \S~\ref{sec:calculus-semantics} is modified accordingly:
$$
\begin{array}{l}
\CASEL{l}{\SOME(c)}{c}{P_1}{P_2} \strans{\tau} P_1\\
\CASEL{l}{\SOME(c)}{y}{P_1}{P_2} \strans{\tau} P_1[c/y]\\
\CASEL{l}{\SOME(c)}{c'}{P_1}{P_2} \strans{\tau} P_2\ {\sf if}\ c \neq c'\\
\CASEL{l}{\NONE}{c}{P_1}{P_2} \strans{\tau} P_2\\
\CASEL{l}{\NONE}{y}{P_1}{P_2} \strans{\tau} P_2
\end{array}
$$
The translation from processes to formulae of \S~\ref{sec:discovering-translation} is lifted from propositional to first-order logic, so as to account for the richer expressiveness of the $\SYN{case}$ clause:
$$
\begin{array}{lll}
\trprot{\CASEL{l}{x}{t}{P_1}{P_2}}{\hp}	& = & \trprot{P_1}{(\hp \land \exists\FV{t}.(x = \SOME(t))}\ \cup\\
& &  \trprot{P_2}{(\hp \land \neg(\exists\FV{t}.(x = \SOME(t)))}\ \cup\\
& &  \{\hp \Rightarrow \overline{l}\}
\end{array}
$$
where $\SOME(\cdot)$ is a unary predicate, $\FV{t}$ denotes the variables free in $t$, and we write $x$ instead of $\overline{x}$ for now $x$ ranges over a set of optional data. Similarly, the translation of binders has now to record the term to which an input variable is bound when the corresponding binder is satisfied:
$$
\ths(\hp,\IN{t}{x}) = \{\exists y.(\hp \land t \Rightarrow (x = \SOME(y))\}
$$
where $t$ ranges over a set of data (the translation of output has to be updated similarly).

Finally, for building the tree some unification is needed in the backward-chaining search of \S~\ref{sec:displaying-inferring}:
$$
\begin{array}{l}
\trtree{\exists\FV{t}(x = \SOME(t))}{\mathcal{D}} = \trtree{\varphi\sigma}{\mathcal{D}}\\
\qquad\qquad {\sf\footnotesize where}\ \left(\exists\FV{t'}(\varphi \Rightarrow (x = \SOME(t')))\right)\, \in\, \rightsystem\ \land\ \exists\sigma.t = t'\sigma\\[3ex]
\trtree{\neg \exists\FV{t}(x = \SOME(t))}{\mathcal{D}} = \trtree{\neg \varphi\sigma}{\mathcal{D}}\\
\qquad\qquad {\sf\footnotesize where}\ \left(\exists\FV{t'}(\varphi \Rightarrow (x = \SOME(t')))\right)\, \in\, \rightsystem\ \land\ \exists\sigma.t = t'\sigma
\end{array}
$$
where $\sigma$ is a most general unifier.

We have thus shown how to lift all levels of the framework to name-passing calculi with full testing capabilities. From a high-level perspective, the extension allows inspecting how security checks are performed, while the basic developments consider checks as atomic entities, distinguishing between them through the cost map. There is the concrete risk, however, that the additional information available in the more detailed ``first-order'' trees would decrease readability drastically. In addition to this, whenever a finer-grained investigation is needed, we could take advantage of the modularity of the propositional framework, as discussed in \S~\ref{sec:discovering-modularity}.

Finally, in order to carry the extension to the Quality Protection Tool of \S~\ref{sec:implementation-tool}, the main extension would consist in introducing the unification of terms in the backward-chaining procedure. 

\end{document}